\newtheorem{assumption}{Assumption}
\newtheorem{remark}{Remark}
\newtheorem{theorem}{Theorem}
\newcommand{\R}{\mathbb{R}} 
\newcommand{\D}{\mathrm{d}} 
\newcommand{\jp}{_{j + 1/2}}
\newcommand{\jm}{_{j - 1/2}}
\newcommand{\stl}{^{(\ell)}}
\newcommand{\dof}{{\em dof }\nolinebreak's}
\newcommand{\totder}[2]{\frac{\D}{\D #1}#2}
\newcommand{\transport}[1]{\partial_t #1 + v \cdot \nabla_x #1}
\newcommand{\G}{\mbox{G}_f}
\newcommand{\M}{\mbox{M}_f}
\newcommand{\Htheo}{\mathcal{H}}
\newcommand{\Mop}{\mathfrak M}
\newcommand{\disG}{{\mathcal G}_f}
\newcommand{\disM}{{\mathcal M}_f}
\newcommand{\Meq}{M_{\mbox{\scriptsize eq}}}
\newcommand{\Teq}{T_{\mbox{\scriptsize eq}}}
\newcommand{\Pra}{{\mbox{P}_{\mbox{r}}}}
\newcommand{\Dt}{\Delta t}
\newcommand{\Dv}{\Delta v}
\newcommand{\Dx}{\Delta x}
\newcommand{\ket}[1]{\large< #1 \large>}
\newcommand{\mom}{\mathbf{U}}
\title{Kinetic models of BGK type and their numerical integration}
\author[1]{Gabriella Puppo}
\affil[1]{La Sapienza Universit\`a di Roma, Roma, Italy\\

URL: {\tt gabriellapuppo.it} \\

email {\tt gabriella.puppo@uniroma1.it}}
\date{}
\begin{document}

\maketitle

\begin{abstract}
This minicourse contains a description of recent results on the modelling of rarefied gases in weakly non equilibrium regimes, and the numerical methods used to approximate the resulting equations. Therefore this work focuses on BGK type approximations, rather than on full Boltzmann models. Within this framework, models for polyatomic gases and for mixtures will be considered. We will also address numerical issues characteristic of the difficulties one encounters when integrating kinetic equations. In particular,  we will consider asymptotic preserving schemes, which are designed to approximate equilibrium solutions, without resolving the fast scales of the approach to equilibrium.
\end{abstract}

Kinetic theory was initially developed to study the behaviour of rarefied gases, with applications, as a typical example, to flow in the higher levels of the atmosphere. Recently its scope has enlarged to include many non equilibrium phenomena, arising, for instance, in the study of microfluids, i.e. flows occurring in domains with microscales, where the equilibrium hypothesis underlying classical gas dynamics does not hold. But kinetic models have also been successfully applied to phenomena which do not stem from fluid dynamics. The attractive feature of kinetic theory beyond gas dynamics is its ability to start from the characteristics of interactions of particles at a microscopic scale, to develop equations for the collective behaviour. These new applications include social sciences, see the examples in \cite{PareschiToscaniBook} or natural sciences \cite{Bellomo2013}, or even traffic flow, \cite{Visconti_TrafficBGK}. 
For a short introduction to rarefied gas dynamics see \cite{Cercignani2000}[Chapt. 1]. A more in depth text is \cite{Cercignani1988}

This paper will review kinetic models of BGK type, and the numerical techniques which permit to obtain accurate and reliable approximate solutions. The paper starts from the standard BGK model of \cite{BGK}, and continues to more recent developments, such as the  ES-BGK model, models for polyatomic gases and for mixtures of fluids composed of different particles.

The BGK model is an approximation of Boltzmann equation, but it contains the same approach to equilibrium. Moreover, numerical schemes for BGK are inherently faster than schemes for Boltzmann, because the equilibrium distribution is partly known. For this reason, BGK is also used as a numerical tool to accelerate the numerical solution of Boltzmann equation, see the penalization technique proposed in \cite{FilbetJin2010AP}, or as a buffer zone, connecting fully kinetic domains to equilibrium regions, in domain decomposition strategies, \cite{Alaia:2012ex}.

From the general Boltzmann equation for a mixture of gases, we will consider the simplified BGK model, which provides a good approximation of the Boltzmann equation close to equilibrium. The main properties of the BGK model, and its improved version, the ES-BGK model, are studied in \cite{AndriesPerthameESBGK2000}. I will summarize those results, which are relevant for a single gas.

Next, for mixtures of gases, two main approaches appear in the literature. One was introduced by Aoki et al. in \cite{AndriesAokiPerthame2002}. It is characterized by a single collision kernel for each distribution function, and the purpose is to be consistent with the single species model, in case the mixture is composed of identical gases. 

Another approach reflects the structure of Boltzmann models for mixtures, and is characterized by a collision kernel for each type of interaction. It is the classical choice in engineering applications. In \cite{KlingPirner:2017} we have generalized existing models of this kind, and proposed a unified approach, which permits to study the main properties of the model, such as the relaxation towards equilibrium. This is the model I will discuss more in detail. The model has been further improved in \cite{BobylevGroppi2018}, where the parameters driving momentum and energy exchanges due to interspecies collisions have been computed, starting from the full Boltzmann collision terms. 

The numerical integration of kinetic equations presents several challenges. The first difficulty is due to the large number of independent variables, because, beside space and time, one must include also the microscopic speeds in the number of degrees of freedom. Thus it is mandatory to use coarse grids, whenever possible. Coarse grids however may lead to large errors in the macroscopic variables. To control these errors, imposing conservation at the discrete level, one must use discrete Maxwellians. This issue was settled in \cite{Mieussens2000Discrete}. Further, to control the error while using coarse grids, high order methods should be considered.

The most challenging difficulty however is due to the stiffness of the collision term, when the system is close to equilibrium. To prevent the use of very small time steps, it is necessary to use implicit or semi-implicit schemes. The problem is that in doing so it is important to preserve the correct equilibrium solutions. This leads to the notion of Asymptotic Preserving schemes (AP). Some recent reviews can be found in \cite{DiMarcoPareschiActa} and \cite{JinLi2017}. Here I will start from the AP method of \cite{PieracciniPuppo2007}, which can be easily extended to mixtures, continuing with methods proposed in \cite{DiMarcoPareschi2013}.

Many interesting results which have been obtained on these models cannot be included in this work. Thus,  I wish to widen the perspective a bit, citing some papers which deal with issues that I will not touch.

Several models for mixtures can be found in the literature. Many provide extensions of the single collision model of \cite{AndriesAokiPerthame2002}, to include chemical reactions, \cite{Bisi2010}, or \cite{GroppiSpiga2013reacting}, ES-BGK extensions \cite{GroppiSpiga2011ES, Brull2014ESmixtures}, polyatomic effects \cite{Bisi2016PolyMixtures}. Models based on a collision term for each type of interaction are more frequent in engineering literature, as \cite{Garzo1989}. Extensions to the polyatomic case are proposed in \cite{KlingPirnerPuppo2018}, or in \cite{BarangerBisiBrull2018PolyMixtures}.

The existence, uniqueness  and positivity of the solution of the mixture model based on a collision kernel for each type of interaction have been proven in \cite{KlingPirner2018}. For a recent review on the important results about the macroscopic limits of kinetic equations see \cite{Golse2014FluidLimits} and its references.

The literature on numerical methods for kinetic problems is very crowded. Here I will concentrate on efficient AP schemes. The first work I know which used a blend of implicit and explicit schemes for kinetic problems is \cite{PareschiRusso2005}. The construction of AP schemes for kinetic problems is described in \cite{FilbetJin2010AP}. This was extended to the ES-BGK case in \cite{FilbetJin2010ES}. An interesting approach to the numerical integration of kinetic equations, based on deviations from equilibrium can be found in \cite{BennouneLemou2008MicroMacro}. This work was followed by several other applications of analogous techniques, for instance \cite{CrestettoCroueilles2012Micro}. A similar approach has also been applied to mixtures of gases, see \cite{JinShiMicroMacro}, \cite{CrestettoKlingMixtures}, or \cite{JinLi2013BGKMixtures}. 

Unfortunately, this paper does not contain numerical results. I wished to address at the same time the richness of these models, and of the corresponding numerical techniques. A discussion and a comparison of numerical results would have required much more space, and they can be found in the references included in this work.

%This paper has the following structure
%\begin{itemize}
%\item[\ref{s:BGK}.] BGK: where does it come from?
%\end{itemize}

\tableofcontents
%%%%%%%%%%%%%%%%%%%%%%%%%%%%%%%%%%%%%%%%%%%%%%%%%%%%%%%%%%%%%%%%%%%%%%%%%%%%%%%%
%%%%%%%%%%%%%%%%%%%%%%%%%%%%%%%%%%%%%%%%%%%%%%%%%%%%%%%%%%%%%%%%%%%%%%%%%%%%%%%%
\section{BGK: where does it come from?}\label{s:BGK}
%%%%%%%%%%%%%%%%%%%%%%%%%%%%%%%%%%%%%%%%%%%%%%%%%%%%%%%%%%%%%%%%%%%%%%%%%%%%%%%%
%%%%%%%%%%%%%%%%%%%%%%%%%%%%%%%%%%%%%%%%%%%%%%%%%%%%%%%%%%%%%%%%%%%%%%%%%%%%%%%%
The BGK model was proposed in \cite{BGK}. A formal derivation can be described as follows. One starts from the standard Boltzmann equation, for the distribution function $f(x,v,t)$, where $f/n(x,t)$ is the probability density of finding a particle at the point $(x,v,t)$ in phase space, and  $n(x,t)$ is the number of particles per unit volume,
\begin{equation}\label{e:Boltz}
\transport{f} =  \int_{\R^d}\int_{S^+}B(|v-v_*|,\hat{n})\left[f(v')f(v'_*) - f(v)f(v_*)\right]\; \D v_*\; \D \omega.
\end{equation}
Here  $t$ is time, $x\in \R^d$ is space, $v_*$ is the velocity of a field particle interacting with the test particle with velocity $v$, while $v'$, $v'_*$ are the post interaction velocities, and the dependence on $x$ and $t$ of the distribution function has been dropped for shortness. Further  $\hat{n}$ is the normal to the vector $v-v_*$, $B(|v-v_*|,\hat{n})$ is the collision cross section of the interaction, and the integral over $S^+$ is the integral over all incoming directions towards the field particle.

The right hand side of the Boltzmann equation \eqref{e:Boltz} is called {\em collision term}. It is convenient to introduce the short hand
\[
Q^B(f,f) = \int_{\R^d}\int_{S^+}B(|v-v_*|,\hat{n})\left[f(v')f(v'_*) - f(v)f(v_*)\right]\; \D v_*\; \D \omega.
\]
The collisions conserve mass, momentum and energy, which, at the microscopic level,  means
\begin{align*}
& v'+v'_* = v+v_* \\
& (v')^2+ (v'_*)^2 = (v)^2+(v_*)^2.
\end{align*}
This implies that in each collision the total masses, momentum and energies of the particles involved in the collision is preserved. Summing over all possible collisions, it is easy to see that
\[
\int_{\R^d} Q^B \D v=0, \qquad \int_{\R^d}v\, Q^B \D v=0 \qquad \mbox{and}\quad \int_{\R^d}v^2\, Q^B \D v=0,
\]
see \cite{Cercignani2000}. The quantities $\phi(v)=[1, v, \tfrac12 v^2] \in \R^{d+2}$ are called {\em collision invariants}. Thus, the conservation properties of the collisions at the microscopic level imply global conservation, which can be expressed as
\begin{equation}\label{e:cons}
\int_{\R^d} \phi(v)\, Q^B \D v=0.
\end{equation}

At the macroscopic level, mass, momentum and energy are obtained as expected values or  {\em moments} of the distribution function $f$ in velocity space. Let $m$ denote the particle mass, then
\begin{align}
\rho(x,t) & = m \int_{\R^d} f(x,v,t) \; \D v  \label{e:mass}\\
(\rho u) (x,t)&  = m\int_{\R^d} v\, f(x,v,t) \; \D v \label{e:momentum} \\
E(x,t) & = m \int_{\R^d} \tfrac12 v^2\, f(x,v,t) \; \D v, \label{e:energy} 
\end{align}
where $\rho$ is the density, $u$ is the macroscopic velocity and $E$ is the energy density per unit volume. Thanks to \eqref{e:cons}, multiplying the Boltzmann equation by the collision invariants and integrating in velocity space, one obtains conservation laws for mass, momentum and energy.

The system is in equilibrium when the collision term is zero. Thus the equilibrium distribution functions are given by solutions of the integral equation $Q^B(f,f)=0$. It is easy to prove, see again any textbook on kinetic theory as \cite{Cercignani2000}, that the conservation properties of the collisions dictate the structure of the equilibrium solutions which are called 
{\em Maxwellian distributions}, and are given by the expression
\begin{equation}\label{e:Maxwellian}
\M = \frac{n(x,t)}{(2\pi KT^0(x,t)/m)^{d/2}}\; {\Large e^{-\frac{m(v-u(x,t))^2}{2KT^0(x,t)}}},
\end{equation}
where  $K$ is the Boltzmann constant, $K=1.38\; 10^{-23} \mbox{Joule}/K^0$ and $T^0(x,t)$ is the temperature measured in  degrees Kelvin, while $m$ is the mass of the molecules composing the gas, which means that, for the moment, we are considering a gas composed of a identical particles. Often, the Maxwellian is written in terms of the {\em specific} gas constant $R$, which is defined as $R=K/m$. The specific gas constant is obtained from the universal constant of gas $\overline{R}$ as $R=\overline{R}/M$, where $M=mN_A$ is the weight of a mole of gas, $N_A$ being Avogadro's number. In the following, I will always write $T$ intending $KT^0$. Therefore, $T$ will have the physical dimensions of an energy, and $\tfrac{d}{2}T$ can be thought of as the average energy of a molecule, in the gas frame of reference. When $f=\M$, the right hand side of eq. \eqref{e:Boltz} is zero, thanks to the conservation properties of the microscopic collisions. Thanks to the $\Htheo$-theorem (see below), it is possible to show that the Maxwellian is the unique equilibrium distribution of the gas, and that an isolated gas approaches equilibrium as it evolves.

%%%%%%%%%%%%%%%%%%%%%%%%%%%%%%%%%%%%%%%%%%%%%%%%%%%%%%%%%%%%%%%%%%%%%%%%%%%%%%%%
\subsection{The standard BGK model}
%%%%%%%%%%%%%%%%%%%%%%%%%%%%%%%%%%%%%%%%%%%%%%%%%%%%%%%%%%%%%%%%%%%%%%%%%%%%%%%%

In the BGK approximation, one supposes that
\begin{itemize}
\item The cross section of the interaction does not depend on the relative velocity of the interacting particles, $B(|v-v_*|,\hat{n})=B(\hat{n})$.
\item The particles reach equilibrium after a single interaction, so the distribution of post interaction particles is Maxwellian.
\end{itemize}

Since the cross section does not depend on $v_*$, the loss term in eq. \eqref{e:Boltz} can be rewritten as
\[
\begin{split}
 \int_{\R^d}\int_{S^+}B(|v-v_*|,\hat{n}) f(v)f(v_*) \D v_*\; \D \omega & = f(v) \int_{S^+}B(\hat{n})\D \omega \int_{\R^d} f(v_*) \D v_* \\
& = \nu(x,t) n(x,t) \; f(x,t,v). 
\end{split}
\]
where $\nu=\int_{S^+}B(\hat{n})$. For the gain term instead, postcollisional distributions are Maxwellians, then
\[
\begin{split}
 & \int_{\R^d}\int_{S^+}B(|v-v_*|,\hat{n}) f(v')f(v_*') \D v_*\; \D \omega \\ & =  \int_{\R^d}\int_{S^+}B(|v-v_*|,\hat{n}) \M(v')\M(v_*') \D v_*\; \D \omega  \\
& =  \int_{\R^d}\int_{S^+}B(|v-v_*|,\hat{n}) \M(v)\M(v_*) \D v_*\; \D \omega  \\
& =  \M(v) \int_{S^+}B(\hat{n}) \D \omega \int_{\R^d}\M(v_*) \D v_*\;   \\
& = \nu(x,t) n(x,t) \;\M(x,t,v),
\end{split}
\]
where I have used the conservation properties of the microscopic collisions to rewrite the equation in terms of the {\em precollisional } Maxwellians.

Substituting these approximations of the collision term into the Boltzmann equation \eqref{e:Boltz}, we obtain the BGK model
\begin{equation}\label{e:BGK}
\transport{f} = \nu\, n (\M -f).
\end{equation}
 Note that $\nu n$ has the physical dimensions of $\mbox{time}^{-1}$. I will often write $\tau=1/(\nu n)$ which is the relaxation time. A typical model for $\tau$ \cite{chapman1970mathematical} is given by
\begin{equation}\label{e:tau}
\frac{1}{\tau} = \frac{nT}{\mu_0}\left( \frac{T_0}{T} \right)^{\omega} \qquad \Longrightarrow \qquad \nu= \frac{T}{\mu_0}\left( \frac{T_0}{T} \right)^{\omega}
\end{equation}
where $\mu_0$ and $T_0$ are the reference viscosity and temperature of the gas and $\omega$ is the exponent of the viscosity law of the gas. I remind that $T$ in these notes has the dimensions of an energy, while the viscosity $\mu_0$ has dimensions of mass divided by length per time.

The equilibrium solution for BGK is clearly $\M=f$, as in Boltzmann equation. Moreover, BGK has the same collision invariants $\phi(v)=[1, v, \tfrac12 v^2]$ of Boltzmann: namely, multiplying  \eqref{e:BGK} by $m\phi(v)$ and integrating in velocity space, one gets the set of conservation laws
\begin{align}
& \partial_t \rho + \nabla \cdot (\rho u) = 0 \label{e:MassCons}\\
& \partial_t (\rho u) + \nabla \cdot (\rho u \otimes u + \mathbb{P}) = 0 \label{e:MomCons}\\
& \partial_t E + \nabla \cdot ( u E + \mathbb{P}u + q) = 0, \label{e:ECons} 
\end{align}
where
\begin{equation}\label{e:pTens}
\mathbb{P} = m\int_{\mathbb{R}^d} (v-u)\otimes(v-u)\, f\, \D v
\end{equation}
is the pressure tensor, and 
\begin{equation}\label{e:qFlux}
q = \frac{m}{2}\int_{\mathbb{R}^d} (v-u)||(v-u)||^2\, f\, \D v
\end{equation}
is the heat flux. The system of equations \eqref{e:MassCons}, \eqref{e:MomCons} and 
\eqref{e:ECons} is not a closed system of equations, because we don't know the relation between $\mathbb{P}, q$ and the conservative variables in $\rho, \rho u, E$. To solve the system, we need $f$, from which we can compute the stress tensor and the heat flux.

At equilibrium, the pressure tensor and the heat flux can be computed explicitly, substituting $f=\M$ in \eqref{e:pTens} and \eqref{e:qFlux}. In fact, the $v$ dependence of the Maxwellian is completely known, and therefore the integrals in velocity space which yield the moments can be computed explicitely. By symmetry, one obtains $q=0$, and $\mathbb{P}$ diagonal. Let $p=\frac{1}{d}\mbox{Tr}\, (\mathbb{P})$ be the pressure, where $d$ is, as already indicated, the dimensions of the velocity space. The energy can be split into kinetic and internal energy as
\[
E = \frac12 \rho u^2 + m \frac12 \int_{\R^d} ||v-u||^2 f \D v= \frac12 \rho u^2 + \rho\, e.
\] 
The link between pressure and internal energy is $\rho e= \tfrac{d}{2}p$. Usually, we can suppose that each degree of freedom in velocity space corresponds to the same energy (equipartition of energy), and we define the temperature as
\begin{equation}
 T = \frac{1}{n}m \int_{\R^d} (v_i-u_i)^2 f \D v= \frac{2}{dn}m \int_{\R^d}\frac12 (v-u)^2 f \D v,
\end{equation}
which means that $dT/2$ is the average kinetic energy of a molecule, in the gas frame of reference. Thus, the average energy of a molecule is $\tfrac12 mu^2+ \tfrac{d}{2}T$.
With this definition, the internal energy per unit mass $e$ and the temperature are linked by
\[
e = \frac{d}{2m}T.
\]
From this relation and the link between the energy and the pressure, one finds the equation of state of the gas
\[
p= \frac{1}{m}\rho T. 
\]
Consider a volume $V$ containing $n_m$ moles of the gas. Then the volume $V$ contains $n=n_m N_A$ molecules, where $N_A$ is Avogadro's number, and the density can be written as $\rho=m n_m N_A/V$. The expression for the pressure becomes
\begin{equation}
pV = n_m N_A T= n_m N_A K T^0 = n_m \overline{R}T^0,
\end{equation}
which is the familiar equation of state of a perfect gas.

Substituting the equilibrium distribution $f=\M$ in \eqref{e:MassCons}, \eqref{e:MomCons} and \eqref{e:ECons}, thanks to the equation of state, one finds
Euler system of classical gas dynamics
\begin{align}
& \partial_t \rho + \nabla \cdot (\rho u) = 0 \label{e:Euler}\\
& \partial_t (\rho u) + \nabla \cdot (\rho u \otimes u + p) = 0 \nonumber\\
& \partial_t E + \nabla \cdot ( u (E + p)) = 0, \nonumber 
\end{align}
closed by the equation of state $p=\frac{2}{d}\rho e$. For monoatomic gases, each molecule has only $d=3$ degrees of freedom, thus $p=\frac{2}{3}\rho e$. For bi-atomic molecules, one must also consider the 2 degrees of freedom given by the possible rotation of the molecule around its axis. In this case, $p=\frac{2}{5}\rho e$, which is the standard equation of state used for air, namely, $p=\rho e(\gamma-1)$, with $\gamma=1.4$. Polyatomic gases will be discussed more in detail in \S \ref{s:Poly}.

%%%%%%%%%%%%%%%%%%%%%%%%%%%%%%%%%%%%%%%%%%%%%%%%%%%%%%%%%%%%%%%%%%%%%%%%%%%%%%%%
\subsubsection{The $\Htheo$-theorem}
%%%%%%%%%%%%%%%%%%%%%%%%%%%%%%%%%%%%%%%%%%%%%%%%%%%%%%%%%%%%%%%%%%%%%%%%%%%%%%%%

A very important property of kinetic models is the $\Htheo$-theorem, which states that entropy decays as the system evolves, until equilibrium is reached.

To illustrate what are the ingredients that draw the gas towards equilibrium, I will include a proof of the $\Htheo$-theorem for the BGK model in the space homogeneous case. More generally, it will be enough to consider the {\em total} entropy, integrating also in space, and assuming suitable decays properties of $f$ at the boundary, see \cite{Cercignani1988}. First, we define the entropy, which is the quantity
\begin{equation}\label{e:entropy}
H(f)(x,t)= \int_{\R^d} f\, \ln f\; \D v.
\end{equation}

\begin{theorem}{\bf $\Htheo$-theorem for the BGK model}\label{H-theorem}
Consider the BGK model in the space homogeneous case. Then the entropy $H$ decays in time,
\[
\frac{\D H(f)}{\D t}  \leq 0,
\] 
with equality if and only if $f=\M$, where $\M$ is the Maxwellian defined by the moments of $f$.
\end{theorem}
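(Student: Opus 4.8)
The plan is to differentiate the entropy directly and exploit the two defining features of the BGK collision operator: it relaxes $f$ toward the Maxwellian $\M$ built from the moments of $f$, and $\M$ shares those moments with $f$. In the space-homogeneous case the model reads $\partial_t f = \nu n(\M - f)$, so, assuming enough decay in $v$ to differentiate under the integral sign, I would write
\begin{equation*}
\frac{\D H(f)}{\D t} = \int_{\R^d} (\partial_t f)(\ln f + 1)\, \D v = \nu n \int_{\R^d} (\M - f)(\ln f + 1)\, \D v.
\end{equation*}

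First I would dispose of the constant $+1$. Since $\M$ is by construction the Maxwellian with the same moments as $f$, we have $\int_{\R^d}(\M-f)\phi(v)\,\D v = 0$ for every collision invariant $\phi(v)=[1,v,\tfrac12 v^2]$; taking $\phi = 1$ kills the constant term, leaving $\tfrac{\D H}{\D t}=\nu n\int_{\R^d}(\M-f)\ln f\,\D v$. The key step is then to notice that $\ln\M$ is itself a collision invariant: from \eqref{e:Maxwellian}, $\ln \M$ is a second-degree polynomial in $v$, hence a linear combination of $1,v,v^2$. Therefore $\int_{\R^d}(\M-f)\ln\M\,\D v=0$ as well, and I may subtract it for free to obtain
\begin{equation*}
\frac{\D H(f)}{\D t} = \nu n \int_{\R^d}(\M-f)\ln\frac{f}{\M}\,\D v = -\nu n\int_{\R^d}(f-\M)\ln\frac{f}{\M}\,\D v.
\end{equation*}

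The conclusion follows from the elementary pointwise inequality $(a-b)\ln(a/b)\ge 0$ for $a,b>0$, with equality if and only if $a=b$, applied with $a=f$ and $b=\M$: the integrand is nonnegative, $\nu n>0$, and so $\tfrac{\D H}{\D t}\le 0$. Equality forces $(f-\M)\ln(f/\M)=0$ almost everywhere, i.e.\ $f=\M$. I expect the only genuine subtleties to be the standard integrability and decay hypotheses needed to justify differentiating under the integral and to guarantee that $H(f)$ and the moment integrals are finite; the algebraic heart of the argument---recognizing $\ln\M$ as a collision invariant so that the moment-matching built into BGK can be brought to bear---is the decisive and nonroutine step, while the final inequality is elementary.
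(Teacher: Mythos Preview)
Your proof is correct and follows essentially the same route as the paper: differentiate $H$, use mass conservation to drop the $+1$, subtract $\ln\M$ (which is a linear combination of collision invariants) for free, and conclude via the pointwise inequality $(a-b)(\ln a-\ln b)\ge 0$. Your explicit observation that $\ln\M$ is a quadratic polynomial in $v$---hence a collision invariant---is exactly the mechanism behind the paper's appeal to ``the conservation laws for mass, momentum and energy'' at that step.
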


\begin{proof}
The proof is very standard, see for instance \cite{Cercignani1988}. I will give a sketch, to show where the main ingredients play a role.
Consider the identity:
\[
(1+\ln f) \partial_t f = \partial_t \big( f\, \ln f \big).
\]
Multiply the space homogeneous BGK model \eqref{e:BGK} by $(1+\ln f)$ and integrate in velocity space. You find,
\begin{equation}\label{e:Htheo}
\begin{split}
\partial_t H & = \frac{1}{\tau} \left[\int_{\R^d} \ln f (\M-f) + \int_{\R^d} (\M-f). \right]\\
    & = \left[\int_{\R^d} \ln f (\M-f) - \int_{\R^d} \ln \M (\M-f) \right] \\
    & = - \int_{\R^d} \frac{\ln f - \ln \M}{f-\M}(f-\M)^2\leq 0,
 \end{split}
\end{equation}
where the conservation laws for mass momentum and energy have been used at the second step and  the convexity of the logarithm at the final step. 
\end{proof}

Note that the proof of the $\Htheo$ theorem depends strongly on convexity, and on all conservation principles. Another way to prove the $\Htheo$ is to prove that the Maxwellian is the unique minimizer of the constrained minimization problem
\begin{equation}\label{e:minimo}
\begin{split}
H(\M) &= \min_{f\in \mathcal{S}(n,u,T)} H(f), \\
      & {\small \mathcal{S}(n,u,T) = \left\{f\geq 0, \;\mbox{s.t.}\; \int\!\! f=n;\; \int\!\! vf = nu; \; \int \!\!(v-u)^2f = \frac{n\,d}{m}T \right\} }.
\end{split}
\end{equation}

%%%%%%%%%%%%%%%%%%%%%%%%%%%%%%%%%%%%%%%%%%%%%%%%%%%%%%%%%%%%%%%%%%%%%%%%%%%%%%%%
\subsection{The ES-BGK model}
%%%%%%%%%%%%%%%%%%%%%%%%%%%%%%%%%%%%%%%%%%%%%%%%%%%%%%%%%%%%%%%%%%%%%%%%%%%%%%%%

When the gas is in equilibrium, the distribution function $f$ coincides with the Maxwellian $\M$ with the same moments of $f$. Thus, BGK describes correctly the equilibrium, because when $f=\M$ the collision term vanishes, and, at the macroscopic level, one recovers compressible gas dynamics. On the other hand, when the gas is weakly off equilibrium, a standard procedure is to apply a Chapman-Enskog expansion \cite{chapman1970mathematical}, in which $f=\M + \tau g + o(\tau)$, where $g$ has zero moments. Keeping terms of order $\tau$ and disregarding higher order terms, at the macroscopic level, the Compressible Navier-Stokes (CNS) equations are obtained,
\begin{align}
& \partial_t \rho + \nabla \cdot (\rho u) = 0 \label{e:CNS}\\
& \partial_t (\rho u) + \nabla \cdot (\rho u \otimes u + p) = \nabla \cdot \big(\mu \sigma \big) \nonumber\\
& \partial_t E + \nabla \cdot ( u (E + p)) = \nabla \cdot \big( \mu \sigma u + \kappa q\big), \nonumber 
\end{align}
where $\mu$ and $\kappa$ are, respectively, the viscosity and the heat conductivity of the gas, and are $O(\tau)$, while
\[
\sigma_{i,j} = \partial_{x_i} u_j + \partial_{x_j} u_i - \frac{2}{d}\nabla \cdot u\, \delta_{i,j}, \qquad \mbox{and}\quad q_i = \nabla T
\]
are the tensor of viscous stresses and the heat flux.
Thus, viscous terms and heat fluxes are derived as off-equilibrium phenomena.

The main difficulty of the standard BGK model is the presence of a single adjustable parameter, $\tau$, which permits to adjust the model to only one of the two terms that appear in the Chapman Enskog expansion, namely, either viscosity or heat fluxes. The ratio between viscous and heat fluxes is measured by the adimensional parameter $\Pra$, the Prandtl number, which is given by
\begin{equation}
\Pra = \frac{\gamma}{\gamma-1}\frac{R\mu}{\kappa}, \qquad \mbox{with}\; \gamma=\frac{d+2}{d}
\end{equation}
where $\mu$ and $\kappa$ are, as we have already seen, the viscosity and heat conductivity obtained with the Chapman-Enskog expansion, which gives their dependency on $T$, while $d$ is the number of microscopic velocity degrees of freedom. As discussed in \cite{Andries:2001vi}, the standard BGK model gives $\Pra=1$, thus only one of the two coefficients, $\mu$ and $\kappa$, can be matched adjusting $\tau$. Instead, the correct value of the Prandtl number for a monoatomic gas is $\Pr\simeq \tfrac23$.

To overcome this difficulty, several alternatives have been proposed. Here, I will discuss the Ellipsoidal Statistic (ES-BGK) model, proposed by Holway in \cite{Holway-ESBGK}. I will follow the version found in \cite{Andries:2001vi}, where the well posedness of the model was finally proven (positivity of the distribution function and $\Htheo$ theorem), which enabled the ES-BGK model to be considered as a sound model from a mathematical point of view.

The idea is to construct a BGK like operator, where the Maxwellian is substituted with a  distribution function which becomes Maxwellian only at equilibrium.
Introduce the normalized stress tensor, $\Theta$ defined  by
\begin{equation}
n\Theta = m \int_{\R^d} (v-u)\otimes(v-u) \, f\; \D v.
\end{equation}
Then, to build the attractive distribution, we consider a combination of the temperature and of the stress tensor. Namely, let
\begin{equation}\label{e:TensorESBGK}
\mathbb{T} = (1-\omega)T \, \mathbb{I}+ \omega \Theta,
\end{equation}
where $\mathbb{I}$ denotes the $d\times d$ identity matrix. The ES-BGK model is defined by
\begin{align}\label{e:ES-BGK}
&\transport{f} = \frac1\tau \Big( \G - f\Big) \\
& \G = \frac{n}{\sqrt{ \det \left(2\pi \frac{\mathbb{T}}{m}\right)}}\; \mbox{exp}\Big( -\frac{m}{2}(v-u)^T\mathbb{T}^{-1}(v-u) \Big).\nonumber
\end{align}
The matrix $\mathbb{T}$ is the ``temperature'' tensor, and $\G$ is the  distribution to which $f$ relaxes to. Since, as $\tau \to 0$, we expect the gas to approach equilibrium, the stress tensor should become diagonal, and, because of equipartition of energy, the diagonal terms of $\Theta$ should all be equal, and coincide with the temperature. From \eqref{e:ES-BGK}, we see that as $\tau\to 0$, $f\to \G$. What is remarkable is that it is possible to prove that while $f\to \G$,  $\G \to \M$, so that the ES-BGK model has the same equilibrium of the standard BGK model. This intuitive idea can be formalized.

\begin{theorem}
 The distribution function $\G$ has the same macroscopic moments of the function $f$. Moreover, suppose that $-\tfrac12 <\omega<1$, then, for $\tau \to 0$, in the space homogeneous case, the system reaches the Maxwellian equilibrium $\M$.
\end{theorem}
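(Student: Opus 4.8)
My plan is to prove the two assertions separately. For the moment identity I would compute the moments of $\G$ directly as Gaussian integrals. The zeroth moment equals $n$ by the choice of the normalisation constant $n/\sqrt{\det(2\pi\mathbb{T}/m)}$; the first moment gives $\int v\,\G\,\D v = nu$ because $\G$ is even about $v=u$, so that $\int (v-u)\,\G\,\D v = 0$. The only point requiring the specific form \eqref{e:TensorESBGK} is the energy: since $\G$ is a Gaussian with covariance $\mathbb{T}/m$, one has $m\int (v-u)\otimes(v-u)\,\G\,\D v = n\mathbb{T}$, and the energy of $\G$ matches that of $f$ precisely when $\mbox{Tr}\,\mathbb{T} = \mbox{Tr}\,\Theta$. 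But $\mbox{Tr}\,\Theta = dT$ by the definition of the temperature, so $\mbox{Tr}\,\mathbb{T} = (1-\omega)\,dT + \omega\,dT = dT$ for every $\omega$. Hence mass, momentum and energy of $\G$ and $f$ coincide, which is exactly what the convex combination in \eqref{e:TensorESBGK} is engineered to achieve.

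For the relaxation result I would exploit that, in the space homogeneous case, the moment equations close. By the first part, testing \eqref{e:ES-BGK} against $m\phi(v)$ shows that $n$, $u$ and $T$ are constant in time, since $\G$ and $f$ share the collision invariants. The key step is to test \eqref{e:ES-BGK} instead against $m(v-u)\otimes(v-u)$: using $m\int (v-u)\otimes(v-u)\,\G\,\D v = n\mathbb{T}$ and substituting \eqref{e:TensorESBGK} yields the closed linear ODE
\[
\partial_t \Theta = \frac1\tau\big(\mathbb{T}-\Theta\big) = \frac{1-\omega}{\tau}\big(T\,\mathbb{I}-\Theta\big).
\]
Because $\omega<1$, the relaxation rate $(1-\omega)/\tau$ is positive, so $\Theta(t) = T\mathbb{I} + (\Theta(0)-T\mathbb{I})\,e^{-(1-\omega)t/\tau}$ converges to the isotropic tensor $T\mathbb{I}$; in particular, for any fixed $t>0$ it tends to $T\mathbb{I}$ as $\tau\to 0$. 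When $\Theta = T\mathbb{I}$ we have $\mathbb{T}=T\mathbb{I}$, so $\G$ reduces exactly to $\M$.

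To conclude that $f$ itself becomes Maxwellian I would combine this with the relaxation of the whole distribution. As $\tau\to 0$ the right-hand side of \eqref{e:ES-BGK} enforces $f-\G\to 0$, so in the limit $f$ is a Gaussian with covariance $\mathbb{T}/m$; since $\mathbb{T}\to T\mathbb{I}$, this Gaussian is $\M$. Equivalently, at a steady state of the homogeneous flow one has $f=\G$, and self-consistency $\Theta=\mathbb{T}$ together with \eqref{e:TensorESBGK} forces $(1-\omega)\Theta = (1-\omega)T\mathbb{I}$, i.e. $\Theta=T\mathbb{I}$, so $f=\M$ is the only equilibrium. Throughout, the hypothesis $-\tfrac12<\omega<1$ guarantees that $\mathbb{T}=(1-\omega)T\mathbb{I}+\omega\Theta$ stays positive definite under the constraint $\mbox{Tr}\,\Theta=dT$, so that $\G$ is a genuine density at every instant; the lower bound $\omega>-\tfrac12$ is precisely what prevents the smallest eigenvalue of $\mathbb{T}$ from vanishing.

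The step I expect to be most delicate is this last one: the moment ODE controls only $\Theta$, and upgrading ``the second moment relaxes'' to ``the entire distribution converges to $\M$'' needs more than moments. In the strict $\tau\to 0$ limit the gap is closed by the Gaussianisation $f\to\G$, which slaves all higher moments to $n$, $u$ and $\mathbb{T}$; for finite $\tau$ and large time it is cleanest to invoke the $\Htheo$-theorem for ES-BGK proved in \cite{Andries:2001vi}, which gives entropy decay with equality only at $f=\M$ and thereby genuine convergence of $f$ to the Maxwellian. I would cite that result rather than retrace all higher moments by hand.
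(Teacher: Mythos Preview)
Your proof is correct and follows essentially the same route as the paper: both compute the Gaussian moments of $\G$ (the paper makes the diagonalising change of variables $\xi=\sqrt{m/2}\,L(v-u)$ with $\mathbb{T}^{-1}=L^TL$ explicit, while you invoke the trace identity $\mbox{Tr}\,\mathbb{T}=\mbox{Tr}\,\Theta=dT$ directly), and both derive the closed ODE $\partial_t\Theta=\tfrac{1-\omega}{\tau}(T\mathbb{I}-\Theta)$ by testing the homogeneous equation against $m(v-u)\otimes(v-u)$. Your additional paragraphs on why $f$ itself, and not only $\G$, tends to $\M$ go beyond what the paper's proof records---it stops at $\G\to\M$---so that extra care is welcome but not strictly required for the statement as the paper reads it.
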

\begin{proof}
This proof is sketched from \cite{AlaiaThesis}, and \cite{Andries:2001vi}. By construction, the tensor $\mathbb{T}$ defined in \eqref{e:TensorESBGK} is symmetric. Further, in \cite{Andries:2001vi}, it is proven that it is also positive definite, provided $-\tfrac12 <\omega<1$. This is not trivial, because the combination in \eqref{e:TensorESBGK} is not convex for the values of $\omega$ of interest. Thus one can introduce the non singular matrix $L$ such that $\mathbb{T}^{-1}=L^TL$. This enables to define the change of variables $\xi = \sqrt{\tfrac{m}{2}}L (v-u)$, with Jacobian $J$ and determinant $\mbox{det}\,J=\sqrt{\mbox{det}(\frac{m}{2}L^TL)}$. Thus the expected value of $\G$ is
\[
\int_{\R^d} \G\; \D v = \frac{n}{\sqrt{\pi}^d}\int_{\R^d} e^{-\xi^T\xi}\; \D \xi = n = \int_{\R^d} f\; \D v.
\]
An analogous approach leads us to 
\begin{align*}
\int_{\R^d} v\G\; \D v & = \int_{\R^d} v f\; \D v = nu\\
m\int_{\R^d} \tfrac12|| v||^2\G\; \D v & = m\int_{\R^d}\tfrac12 || v||^2 f\; \D v = \tfrac12 \rho ||u||^2 + \tfrac12 n\,d\,T.
\end{align*}
Next, we observe that the stress tensor $\Theta$ satisfies the equation
\[
\partial_t (n\Theta) = \frac{1}{\tau}\left(m \int (v-u)\otimes(v-u) \G\; \D v-n\Theta\right).
\]
The integral involving $\G$ can be explicitly evaluated to give
\[
m \int (v-u)^T\G(v-u) \; \D v = \frac{2n}{\sqrt{\pi^d}} L^{-1} \left( \int_{\R^d} \xi \, \xi^T e^{-\xi^T\xi}\right) L^{-T} =n\, \mathbb{T}.
\]
Substituting the expression for $\mathbb{T}$ and eliminating $n$ which is constant in time, one obtains the time evolution for the stress stensor,
\begin{equation}\label{e:StressDecay}
\partial_t \Theta= \frac{1-\omega}{\tau}\Big(T\,\mathbb{I}-\Theta\Big).
\end{equation}
Thus, we have a relaxation law for the stress tensor, provided $\omega<1$, towards the diagonal matrix $T\,\mathbb{I}$, with entries equal to $T$ on the main diagonal. Substituting this information in \eqref{e:TensorESBGK}, we see that $\mathbb{T}\to T\mathbb{I}$ as $\tau\to 0$, which means that $\G\to \M$.
\end{proof}

The parameters $\omega$ and $\tau$ can be adjusted to reproduce the correct Prandtl number of the gas. In fact a Chapman Enskog expansion, applied to the ES-BGK model,  leads to
\[
\Pr = \frac{1}{1-\omega},
\]
so that, for $\omega=-\tfrac12$, the correct value of the Prandtl number is recovered. Thus, it is important to be able to choose a non convex combination in 
the definition of $\mathbb{T}$ in \eqref{e:TensorESBGK}, because the desired result is obtained with a negative value of $\omega$.

One of the main results of \cite{Andries:2001vi} is the first proof of the $\Htheo$   theorem for the ES-BGK model. The proof uses a constrained minimization problem similar to \eqref{e:minimo}, but in which the space of the constraint is generalized to account also for the ``temperature'' tensor $\mathbb{T}$. With this proof, the ES-BGK model has been freed to be used extensively in computations, because it reproduces the correct viscous and heat exchanges, but has also a sound mathematical background: positivity of the distribution function and entropy decay.

%%%%%%%%%%%%%%%%%%%%%%%%%%%%%%%%%%%%%%%%%%%%%%%%%%%%%%%%%%%%%%%%%%%%%%%%%%%%%%%%
%%%%%%%%%%%%%%%%%%%%%%%%%%%%%%%%%%%%%%%%%%%%%%%%%%%%%%%%%%%%%%%%%%%%%%%%%%%%%%%%
\section{BGK models for polyatomic gases}
%%%%%%%%%%%%%%%%%%%%%%%%%%%%%%%%%%%%%%%%%%%%%%%%%%%%%%%%%%%%%%%%%%%%%%%%%%%%%%%%
%%%%%%%%%%%%%%%%%%%%%%%%%%%%%%%%%%%%%%%%%%%%%%%%%%%%%%%%%%%%%%%%%%%%%%%%%%%%%%%%

In polyatomic gases, each particle has an energy that depends not only on traslational degrees of freedom, but also on rotations and vibrations. The traslational degrees of freedom describe the motion of the molecule across space and they are responsible for the kinetic energy of the molecule. The remaining modes are internal, and they do not result in the bulk movement of the gas, but they do contribute to the temperature and to the heat exchanges within the gas and with the external environment.

For the sake of simplicity, in this section I will consider biatomic molecules, but the models can be easily extended to more complex particles. In many applications, air can be considered as a bi-atomic gas. Experimental data, see for instance \cite{Semyonov1984} and DSMC (Direct Simulation Monte Carlo) simulations \cite{Frezzotti1997} show that the heat transfer and the shock structure can be very different for bi-atomic and mono-atomic fluids. Thus, an extension of the BGK model to treat also the polyatomic case is clearly very important in applications.

Several researchers have proposed kinetic models within the BGK approach to account for the complexity of polyatomic gases. Here I will discuss the new model proposed in \cite{Bernard2018Poly}, and the one analyzed in \cite{Andries:2001vi}, but see also \cite{Bisi2016PolyMixtures} or \cite{Brull2009Poly}.

%%%%%%%%%%%%%%%%%%%%%%%%%%%%%%%%%%%%%%%%%%%%%%%%%%%%%%%%%%%%%%%%%%%%%%%%%%%%%%%%
\subsection{A multi-temperature BGK model}
%%%%%%%%%%%%%%%%%%%%%%%%%%%%%%%%%%%%%%%%%%%%%%%%%%%%%%%%%%%%%%%%%%%%%%%%%%%%%%%%

Let $d$ denote the dimensions of physical space and $r$ the number of internal degrees of freedom, which, for a biatomic molecule, are the two rotational degrees of freedom, perpendicular to the axis of the molecule. Let $D=d+r$ be the total number of degrees of freedom. Then the microscopic velocity is a point $v\in \R^D$, $\tfrac12 m (v_1^2+\dots +v_d^2)$ is the kinetic energy of the molecule, while $\tfrac12 m (v_{d+1}^2+\dots +v_{d+r}^2)$ is the internal energy due to the particle rotation.

The macroscopic quantities are given by
\begin{align}
\rho & = m\int_{\R^D} f\; \D v \\
u_k & = \frac{m}{\rho} \int_{\R^D} v_k f\; \D v \quad k=1,\dots, d. \nonumber \\
0 & = \frac{m}{\rho} \int_{\R^D} v_k f\; \D v \quad k=d+1,\dots, D. \nonumber
\end{align}
Here we have considered the fact that the traslational velocities may result in a bulk movement of the gas with speed $u\in \R^d$, while the components of the microscopic velocity corresponding to the rotational degrees of freedom have an expected value equal to zero.

In standard gas dynamics, one supposes that each degree of freedom contributes equally to the energy of the gas, so to each degree of freedom one associates the same temperature. This however is not true for polyatomic gases, because the rotational and the traslational temperatures decay towards equilibrium with different relaxation rates. Thus we can say that the gas has two internal energies per unit mass $e_t$ and $e_r$ given by

\begin{align}
\rho e_t & =  \tfrac{1}{2} m \int_{\R^D} \sum_{k=1}^d (v_k-u_k)^2 f\, \D v \label{e:Etrans}\\
\rho e_r & =  \tfrac{1}{2} m \int_{\R^D} \sum_{k=d+1}^D (v_k)^2 f\, \D v \label{e:Erot}.
\end{align}

We assume that similar degrees of freedom have the same temperature, in a sort of partial equipartition of energy, so to $e_t$ and $e_r$ we associate two temperatures, namely $T_t$, the traslational temperature, and $T_r$, the rotational temperature, which are given by
\begin{equation}\label{e:PolyT}
T_t= \frac{2m}{d}e_t \qquad \mbox{and}\qquad T_r= \frac{2m}{r}e_r.
\end{equation}

The challenge is to model a system in which the energy decays are different for the traslational and the rotational modes. In the following, I will describe the model we proposed in \cite{Bernard2018Poly}.

The first equation is a standard BGK like relaxation, but the Maxwellian is characterised by two yet unknown temperatures,
\begin{equation}
\partial_t f+v\cdot\nabla_x f= 
\dfrac{1}{\tau}\Big(\M-f\Big)  \label{eq:BGK_poly} 
\end{equation}
\begin{multline}
\M(x,v,t)= n (x,t)\prod_{k=1,d}\Big(\frac{m}{2\pi \Lambda_t}\Big)^{1/2}\mbox{exp}\Big(-\frac{m}{2\Lambda_t}(v_k-u_k)^{2}\Big) \; \times\\
 \prod_{k=d+1,D}\Big(\frac{m}{2\pi \Lambda_r}\Big)^{1/2}\mbox{exp}\Big(-\frac{m}{2\Lambda_r}(v_k)^{2}\Big). \nonumber
\end{multline}
The temperatures $\Lambda_t$ and $\Lambda_r$ are non equilibrium temperatures which eventually will decay to a common temperature $\Teq$. The relaxation towards the equilibrium temperature is  governed by energy conservation. This equation is obtained by the relaxation of the local Maxwellian $\M$ to the equilibrium Maxwellian $\Meq$,
\begin{align}
&\partial_t \M+v \cdot\nabla_x \M= \dfrac{1}{Z_r\tau}\Big(\Meq-\M\Big)  \label{eq:BGK_polyMax} \\
& \Meq(x,v,t)=\dfrac{n(x,t)}{(2\pi \Teq(x,t)/m)^{D/2}}\mbox{exp}\Big(-\dfrac{m||v-\tilde{u}||^{2}}{2\Teq(x,t)}\Big). \nonumber
\end{align}
where $\tilde{u}=[u_1,\dots,u_d,0,\dots,0]^T\in \R^D$.
Here, $Z_r$ is a parameter that accounts for the fact that the rotational collision frequency is a priori different from the traslational collision frequency, thus the relaxation time towards a common temperature $\Teq$ is governed by a  characteristic time $Z_r\tau$ which can be different from the relaxation time $\tau$ appearing in the evolution of $f$.

Since the Maxwellian is a known function of $v$, and $\M$ and $\Meq$ share the first moments, namely $n(x,t)$ and $u(x,t)$, the only quantities that need to be found are the two partial temperatures $\Lambda_t$ and $\Lambda_r$. Thus, the relaxation equation for the Maxwellian can be reduced, multiplying it by  $m\prod_{k=d+1,D}v_k^2$ and integrating in phase space, to yield the evolution equation for the rotational energy,
\begin{equation}\label{eq:ConsTrot}
\partial_t(\rho \Lambda_r)+\nabla_x(\rho u\Lambda_r)=\dfrac{\rho}{Z_r\tau}(\Teq-\Lambda_r).
\end{equation}
This equation regulates the heat exchange between the different degrees of freedom.
which can be simplified using mass conservation to give:
\begin{equation}
\partial_t \Lambda_r+u\cdot\nabla_x \Lambda_r=\dfrac{1}{Z_r\tau}(\Teq-\Lambda_r).
\label{eq:Tr_closure}
\end{equation}
The system is then closed imposing that total energy is conserved in \eqref{eq:BGK_poly}:
\begin{align*}
0 & = m\int_{\R^{\D}}\dfrac{1}{2}\sum_{k=1}^{\D}v_k^2 \Big( \M -f \Big) \\
&=\dfrac{1}{2}\, n\, \Big( d\Lambda_t+ r \Lambda_r - d\,T_t- r\,T_r \Big)
\end{align*}
which simply says that
\begin{equation}\label{e:PartialTemperatures}
d\Lambda_t+ r \Lambda_r = d\,T_t + r\,T_r. 
\end{equation}
Applying conservation of energy also to \eqref{eq:BGK_polyMax}, one obtains the second closure relation needed,
\begin{equation}\label{e:EqTemperature}
d\Lambda_t+ r \Lambda_r = (d + r) \Teq. 
\end{equation}
The whole model is composed by \eqref{eq:BGK_poly}, the {\em scalar} equation \eqref{eq:Tr_closure} or its conservative version \eqref{eq:ConsTrot} and the energy conservation constraints \eqref{e:EqTemperature}, \eqref{e:PartialTemperatures}. As an example $Z_r$, which typically is larger than 1, can be chosen as in \cite{PolyCoeff}.

%%%%%%%%%%%%%%%%%%%%%%%%%%%%%%%%%%%%%%%%%%%%%%%%%%%%%%%%%%%%%%%%%%%%%%%%%%%%%%%%
\subsubsection{$\Htheo$ theorem}
%%%%%%%%%%%%%%%%%%%%%%%%%%%%%%%%%%%%%%%%%%%%%%%%%%%%%%%%%%%%%%%%%%%%%%%%%%%%%%%%

The polyatomic model just proposed is well posed because it satisfies an $\Htheo$ theorem, and, at least in the space homogeneous case, the distribution function remains positive for all times, if the initial data are non negative.

Let us define the entropy for the polyatomic model as
\begin{equation}
H(f)=\int_{\R^{\D}}f \ln f \D v +
  Z_r \int_{\R^{\D}}\M \ln \M \D v.
\label{eq:H-function}
\end{equation}

\begin{theorem}{\bf $\Htheo$-theorem for the polyatomic BGK model}\label{H-theoremPoly}

\noindent Let $f$ be the solution of the polyatomic model \eqref{eq:BGK_poly}, and let $Z_r\geq 1$. Suppose that at the initial time the distribution function $f$ is non negative, then, in the space homogeneous case,
\begin{displaymath}
\dfrac{dH}{dt}\leqslant 0
\end{displaymath}
for all time. Moreover $\dfrac{dH}{dt} =0$, if and only if $f=\Meq$.
\end{theorem}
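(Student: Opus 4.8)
The plan is to differentiate $H$ along the space homogeneous flow, reduce $\frac{dH}{dt}$ to two entropy-production integrals whose prefactors collapse, and then regroup the result into three manifestly non-positive pieces.

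First I would specialize to the homogeneous case, so that the transport terms drop and \eqref{eq:BGK_poly} becomes $\partial_t f=\tfrac1\tau(\M-f)$, while \eqref{eq:Tr_closure} becomes $\partial_t\Lambda_r=\tfrac1{Z_r\tau}(\Teq-\Lambda_r)$. Since mass, translational momentum and total energy are conserved, $n$, $u$ and $\Teq=(dT_t+rT_r)/(d+r)$ are constant in time. Differentiating \eqref{eq:H-function} gives $\frac{dH}{dt}=\int_{\R^D}(1+\ln f)\,\partial_t f\,\D v+Z_r\int_{\R^D}(1+\ln\M)\,\partial_t\M\,\D v$. In the first integral the constant $1$ is annihilated by mass conservation $\int(\M-f)\,\D v=0$. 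For the second, the decisive observation is that $\ln\M$ is a quadratic polynomial in $v$ and $n,u$ are frozen, so $\tfrac{d}{dt}\int\M\ln\M\,\D v$ only sees the rates of change of the translational and rotational energies of $\M$; these are supplied by \eqref{eq:Tr_closure} and energy conservation, and coincide with the energy moments of $\tfrac1{Z_r\tau}(\Meq-\M)$ appearing in \eqref{eq:BGK_polyMax}. Hence $Z_r\int(1+\ln\M)\partial_t\M\,\D v=\tfrac1\tau\int\ln\M(\Meq-\M)\,\D v$, and crucially the factor $Z_r$ cancels, leaving
\[
\tau\,\frac{dH}{dt}=\int_{\R^D}\ln f\,(\M-f)\,\D v+\int_{\R^D}\ln\M\,(\Meq-\M)\,\D v=:I_1+I_2.
\]

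The heart of the argument is to rewrite $I_1+I_2$ as a sum of non-positive terms. Expanding and adding and subtracting $\int\Meq\ln\Meq\,\D v$, I would regroup as
\[
I_1+I_2=\int_{\R^D}\M\ln\frac{f}{\M}\,\D v+\int_{\R^D}\Meq\ln\frac{\M}{\Meq}\,\D v+\Big(\int_{\R^D}\Meq\ln\Meq\,\D v-\int_{\R^D}f\ln f\,\D v\Big).
\]
The first two integrals are non-positive by $\ln x\le x-1$ together with the equalities of mass $\int\M\,\D v=\int f\,\D v=\int\Meq\,\D v=n$, which make $\int\M(f/\M-1)=0$ and $\int\Meq(\M/\Meq-1)=0$. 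The last bracket is non-positive because $\Meq$ is the unique minimizer of $g\mapsto\int g\ln g$ among all $g\ge0$ with the prescribed mass, translational momentum and total energy — the polyatomic analogue of the minimization \eqref{e:minimo} — and $f$ shares exactly these three moments with $\Meq$ by \eqref{e:PartialTemperatures} and \eqref{e:EqTemperature}. Therefore $\frac{dH}{dt}\le0$. For the equality case each piece vanishes only at a specific configuration: the first forces $f=\M$, the second $\M=\Meq$, and the last forces $f=\Meq$ by uniqueness of the entropy minimizer; any one of these together with the non-positivity of the others pins down $f=\M=\Meq$, where both relaxation terms vanish and $\frac{dH}{dt}=0$.

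The step I expect to be the main obstacle is obtaining the clean combination $I_1+I_2$ with $Z_r$ removed and then finding the right regrouping. A naive imitation of the single-temperature proof — subtracting $\int\ln\M(\M-f)\,\D v$ on the ground that $\ln\M$ is a combination of collision invariants — fails here, because $f$ and $\M$ share only the \emph{total} energy and not the separate translational and rotational energies (this is exactly what \eqref{e:PartialTemperatures} expresses), so $\int\ln\M(\M-f)\,\D v\neq0$ in general. The three-term splitting above is precisely what circumvents this: it trades the broken invariant cancellation for the global entropy-minimization property of $\Meq$, and the weight $Z_r$ in the definition \eqref{eq:H-function} of $H$ is exactly what makes the two entropy-production contributions combine additively.
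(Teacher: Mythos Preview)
Your derivation of $\tau\,\tfrac{dH}{dt}=I_1+I_2$ is correct and coincides with the paper's starting point: the paper multiplies \eqref{eq:BGK_poly} by $1+\ln f$, \eqref{eq:BGK_polyMax} by $Z_r(1+\ln\M)$, integrates, and after adding and subtracting $(\M-f)\ln\M$ arrives at
\[
\tau\,\frac{dH}{dt}=\underbrace{-\int_{\R^D}(f-\M)(\ln f-\ln\M)\,\D v}_{A}\;+\;\underbrace{\int_{\R^D}(\Meq-f)\ln\M\,\D v}_{B},
\]
which is algebraically the same expression as your $I_1+I_2$. Your justification of the $Z_r$ cancellation via the moment identities $\dot\Lambda_t=\tfrac{1}{Z_r\tau}(\Teq-\Lambda_t)$, $\dot\Lambda_r=\tfrac{1}{Z_r\tau}(\Teq-\Lambda_r)$ (the first following from \eqref{eq:Tr_closure} together with $d\Lambda_t+r\Lambda_r=(d+r)\Teq$) is the careful way to say what the paper obtains by formally using \eqref{eq:BGK_polyMax}.

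Where you genuinely diverge from the paper is in the sign analysis. The paper keeps the split $A+B$: $A\le 0$ by monotonicity of the logarithm, while for $B$ it subtracts $\int(\Meq-f)\ln\Meq=0$ to write $B=-\int(\ln\M-\ln\Meq)(\Meq-f)$ and then declares the remaining estimate ``a little technical'', deferring it to \cite{Bernard2018Poly}. Your three-term regrouping
\[
I_1+I_2=\int\M\ln\tfrac{f}{\M}+\int\Meq\ln\tfrac{\M}{\Meq}+\big(H(\Meq)-H(f)\big)
\]
is a different and self-contained route: the first two terms are non-positive by $\ln x\le x-1$ and equality of masses, and the third by the Gibbs minimization principle in $\R^D$, since $f$ and $\Meq$ share mass, full momentum $(u,0)$ and total energy. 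This trades the paper's ``technical'' inequality for a structural fact about $\Meq$, and it makes the equality discussion immediate: vanishing of all three pieces forces $f=\M$, $\M=\Meq$ and $f=\Meq$, hence $f=\M=\Meq$, matching the paper's conclusion. The price is that you must invoke the polyatomic analogue of \eqref{e:minimo}, whereas the paper's argument stays at the level of explicit manipulations with the Maxwellians; conversely, your decomposition avoids importing an estimate from an external reference.
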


\begin{proof}

If $Z_r=1$, $\M=\Meq$, so the model reduces to a BGK model with a single temperature, for which the $\Htheo$ theorem holds. Let us suppose then that $Z_r>1$.

Since density, momentum and total energy are conserved, the equation for the Maxwellian is given by \eqref{eq:BGK_polyMax}.
Let us multiply equation \eqref{eq:BGK_poly} by $(1+\ln f)$ and integrate over the space  of all microscopic velocities  $\R^D$. Then multiply \eqref{eq:BGK_polyMax} by $Z_r(1+\ln\M)$, integrate in velocity space and add the two results. Using the fact that $\int f=\int\M= \int\Meq$, and adding and subtracting $(\M-f)\ln\M$, we obtain
\begin{equation} \label{eq:dHdt}
\dfrac{d H}{dt}=\dfrac{1}{\tau} \underbrace{ \int_{\R^D}
-\left(f-\M\right)\left(\ln{f}-\ln{\M}\right) \, \D v}_{A}+
\dfrac{1}{\tau} \underbrace{\int_{\R^D}
    \left(\Meq-f\right)\ln{\M} \, \D v}_{B}
\end{equation}
The term $A$ is clearly negative, due to the convexity of the $\log$ function. We continue evaluating the sign of $B$. To this end, we subtract the quantity $\int (\Meq-f) \ln{\Meq}$, which is zero, due to conservation of mass, momentum and total energy, so $B$ can be rewritten as
\begin{displaymath}
B = - \int_{\R^D} \big(\ln \M - \ln \Meq \big)(\Meq-f).
\end{displaymath}
Proving that $B\leq 0$ is a little technical, see \cite{Bernard2018Poly}.
Since both $A$ and $B$ are non-positive, their sum is also non-positive. Moreover, $A$ is zero if and only if $f=\M$, while $B$ is zero if and only if $\M=\Meq$ or $f=\Meq$. This implies that their sum is zero if and only if $f=\M=\Meq$, which means that at equilibrium $f$ is a Maxwellian with all temperatures equal to $\Teq$.
\end{proof}

%%%%%%%%%%%%%%%%%%%%%%%%%%%%%%%%%%%%%%%%%%%%%%%%%%%%%%%%%%%%%%%%%%%%%%%%%%%%%%%%
\subsubsection{Positivity of the temperatures}
%%%%%%%%%%%%%%%%%%%%%%%%%%%%%%%%%%%%%%%%%%%%%%%%%%%%%%%%%%%%%%%%%%%%%%%%%%%%%%%%

Consider again the space homogeneous case. Suppose that at the initial time $f(v,t=0)\geq0$. Then $T_t$ and $T_r$ at $t=0$ are both positive. We set up a ``well prepared'' initial condition, namely we set $\Lambda_t(t=0)=T_t(t=0)$ and $\Lambda_r(t=0)=T_r(t=0)$. Then all temperatures involved in the model remain positive for all time.

In fact, $\Teq=dT_t+rT_r$ remains constant at all times, and since it is positive at $t=0$, it will remain a positive number. Integrating the equation for $\Lambda_r$ \eqref{eq:Tr_closure}, with $u\cdot\nabla_x\Lambda_r\equiv 0$, one finds
\[
\Lambda_r(t) = \Lambda_r(0)e^{-\tfrac{t}{Z_r\tau}}+\Teq \big( 1- e^{-\tfrac{t}{Z_r\tau}}\big),
\]
which is a convex combination of $\Lambda_r(0)$ and $\Teq$, proving that $\Lambda_r$ remains positive for all time. Next, we multiply \eqref{eq:BGK_poly} by $\prod_{k=d+1}^{d+r}v_k^2$ and we integrate in velocity space. Using mass conservation one obtains the evolution of the rotational temperature
\[
\frac{\D T_r}{\D t} = \frac{1}{\tau}\big( \Lambda_r-T_r\big).
\]
Substituting the analytic solution for $\Lambda_r$ just computed, we obtain a linear non homogeneous ODE, which has solution
\[
T_r(t) = T_r(0) e^{-\tfrac{t}{\tau}} + \tfrac{Z_r}{Z_r-1}(\Lambda_r(0)-\Teq )
\big( e^{-\tfrac{t}{Z_r\tau}}-e^{-\tfrac{t}{\tau}} \big) + \Teq \big(1-e^{-\tfrac{t}{\tau}}\big).
\]
Substituting the well prepared initial condition, we find
\[
T_r(t) = T_r(0) \left( \tfrac{Z_r}{Z_r-1}e^{-\tfrac{t}{Z_r\tau}} - \tfrac{1}{Z_r-1}e^{-\tfrac{t}{\tau}} \right) + \Teq \left(1- \tfrac{Z_r}{Z_r-1}e^{-\tfrac{t}{Z_r\tau}} + \tfrac{1}{Z_r-1}e^{-\tfrac{t}{\tau}} \right) .
\]
Let $c(t;Z_r)$ be the coefficient in the first parenthesis. Clearly $c(0;Z_r)=1$, while $c(t;Z_r) \to 0$ as $t \to \infty$. Moreover $\partial_t c(t;Z_r)\leq 0$, provided $Z_r > 1$. Thus, $T_r(t)$ is a convex combination of positive numbers, and therefore remains positive for all times. The same argument applies also to $T_t$ and $\Lambda_t$.

%%%%%%%%%%%%%%%%%%%%%%%%%%%%%%%%%%%%%%%%%%%%%%%%%%%%%%%%%%%%%%%%%%%%%%%%%%%%%%%%
\subsection{Chu's reduction}\label{Chu}
%%%%%%%%%%%%%%%%%%%%%%%%%%%%%%%%%%%%%%%%%%%%%%%%%%%%%%%%%%%%%%%%%%%%%%%%%%%%%%%%

The polyatomic model we have introduced requires a large number of independent variables, because each new degree of freedom is associated to a new component for the vector $v$ of microscopic speeds. Thus, the computational complexity of the polyatomic model \eqref{eq:BGK_poly} increases dramatically with respect to the standard BGK model for a monoatomic gas, and becomes prohibitive if the number of internal modes $r$ is large. This complexity however can be drastically reduced using an approach proposed by Chu \cite{Chu1965}, which we will adapt in the following to the case of the polyatomic BGK model.

In  the standard BGK model, Chu's reduction can be applied whenever the distribution function $f$ depends only on $m<d$ degrees of freedom in space. Then it is possible to rewrite the kinetic equation using only $m$ degrees of freedom, also in the microscopic velocity space.
For example, in a two dimensional problem in space, the number of independent variables can be reduced to four plus time (two in space and two in microscopic velocity). 

We review Chu's reduction, outlining how it can be applied to the polyatomic model, reducing the computational complexity to $d$ independent variables in velocity space, instead of $d+r$, at the price of introducing one distribution function for each internal temperature. In the case of a bi-atomic molecule, we will apply the reduction to aggregate the internal energy degrees of freedom.
Let us consider the case in which we want to reduce the $r$  rotational degrees of freedom (\dof), while the system has $d$ traslational \dof, again with $d+r=D$. Let us label the  indices pertaining to the translational and the rotational \dof\ as $k=1,\dots,d$ and $k=d+1, \dots, d+r=D$ for the traslational and the rotational \dof, respectively. Correspondingly, the microscopic velocities will be partioned as $(\xi_1,\dots,\xi_{d},\eta_{1},\dots,\eta_{r})= (\bm{\xi},\bm{\eta})$, with $\xi_k=v_k, k=1,\dots,d$, and $\eta_k=v_{d+k}, k=1,\dots, r$. We introduce the two reduced distribution functions
\begin{align*}
&f_1(x,\bm{\xi},t)=\int_{\R^{r}} f(x,\bm{\xi},\bm{\eta},t) \;\D \bm{\eta}.  \\
&f_2(x,\bm{\xi},t)=\int_{\R^{r}}\sum_{k=1}^{r}\eta_k^2 f(x,\bm{\xi},\bm{\eta},t)\;\D \bm{\eta}.
\end{align*}
The model reduces to a system of two equations:
\begin{align}\label{eq:Chuf}
&\partial t f_1+\bm{\xi}\cdot\nabla_x f_1= \dfrac{1}{\tau}\Big(M_{f_1}-f_1\Big)\\
&\partial t f_2+\bm{\xi}\cdot\nabla_x f_2= \dfrac{1}{\tau}\Big(M_{f_2}-f_2\Big) \nonumber
\end{align}
where the reduced Maxwellians are expressed as:
\begin{align*}
&M_{f_1}(x,\bm{\xi},t)=\int_{\R^{r}}
\M(x,\bm{\xi},\bm{\eta},t)
\;\D \bm{\eta}\\
&M_{f_2}(x,\bm{\xi},t)=\int_{\R^{r}}\sum_{k=1}^{r}\eta_k^2 \;
\M(x,\bm{\xi},\bm{\eta},t)
\;\D \bm{\eta}
\end{align*}
Computing the integrals in the reduced velocity space $\R^r$, we find
\begin{align}\label{eq:ChuM}
&M_{f_1}(x,\bm{\xi},t)= n(x,t) \Big( \frac{m}{2\pi \Lambda_t}\Big)^{d/2}
\mbox{exp}\Big(-\frac{m}{2\Lambda_t}(\bm{\xi}-u)^{2}\Big)\\
&M_{f_2}(x,\bm{\xi},t)=\tfrac{r\Lambda_r}{m}\,M_{f_1}(x,\bm{\xi},t). \nonumber
\end{align}
where we have evaluated the integral which defines $M_{f_2}$, recalling that the expected value of the velocity on the reduced dimensions is zero. Note that $f_2$ and $M_{f_2}$ have the dimensions of a distribution function, times a velocity squared.

The macroscopic quantities needed to compute $M_{f_1}$ and $M_{f_2}$ are found from the conservation equations. One finds
\begin{align}\label{e:ChuMoments}
n(x,t) &= \int_{\R^d} f_1(x,\bm{\xi},t) \D \bm{\xi}  \\
u(x,t) &= \int_{\R^d} \bm{\xi}f_1(x,\bm{\xi},t) \D \bm{\xi} \nonumber \\
\frac{n(x,t)}{m} \left( d\Lambda_t(x,t) + r\Lambda_r(x,t)\right) &= \int_{\R^d} (\bm{\xi}-u)^2 f_1(x,\bm{\xi},t) \D \bm{\xi} + \int_{\R^d}f_2(x,\bm{\xi},t) \D \bm{\xi}. \nonumber
\end{align}
Note that $f_2$ is needed only to compute the temperature. Further, if $\Lambda_t=\Lambda_r$, the problem would be closed, using the two evolution equations \eqref{eq:Chuf}, with the two Maxwellians defined by \eqref{eq:ChuM}, through the moments computed in \eqref{e:ChuMoments}. If instead the two temperatures are different one needs also the relaxation equation for the temperatures \eqref{eq:Tr_closure} and the closure relation \eqref{e:EqTemperature}.
%Their analytical expressions can be calculated by recalling that the expected value of the velocity on the reduced dimensions is zero:
%\begin{align}
%&M_{f_1}= n\prod_{k=1}^{d}\Big(\dfrac{1}{2\pi\Lambda_t}\Big)^{1/2}\mbox{exp}\Big(-\_k(\xi_k-u_k)^2\Big)= 
%\dfrac{\rho}{(2\pi\Lambda)^{n/2}}\mbox{exp}\Big(-\frac{(\boldsymbol{\xi}-\boldsymbol{U})^2}{2\Lambda}\Big)
%\\
%&\MBGK_{\gred}=\sum_{k=\dfree-\nrot+1}^{\dfree}\dfrac{1}{4\nrgdistc_k}\MBGK_{\fred}
% = \frac{\nrot \Theta}{2} \MBGK_{\fred}.
%\end{align}
%Equation \eqref{eq:Tr_closure} is also needed to close the system. The total temperature $T_{eq}$ needed in \eqref{eq:Tr_closure} is obtained from the total internal energy of the system, which is given by
%\begin{equation}\label{ChuCoupling}  
%\frac{d}{2}\rho T_{eq} = \int_{\R^n} \frac12 (\boldsymbol{\xi}-\boldsymbol{U})^2 \fred \; d\boldsymbol{\xi} +\int_{\R^n}  \gred \; d \boldsymbol{\xi}.
%\frac{d}{2}\rho T_{eq} = \int_{\R^n} \frac12 (\boldsymbol{\xi}-\boldsymbol{U})^2 \fred \; d\boldsymbol{\xi} +m \int_{\R^n} \frac{1}{2} \gred \; d \boldsymbol{\xi}.
%\end{equation}
%The coupled system \eqref{eq:Chuf}, together with \eqref{eq:Tr_closure} and the coupling condition \eqref{ChuCoupling} is equivalent to \eqref{FullModel}, but the unknowns $\fred$ and $\gred$ depend only on $x$, $\vmicro$ and $t$. 

%%%%%%%%%%%%%%%%%%%%%%%%%%%%%%%%%%%%%%%%%%%%%%%%%%%%%%%%%%%%%%%%%%%%%%%%%%%%%%%%
\subsection{A polyatomic BGK model with a single microscopic variable for the internal modes}
%%%%%%%%%%%%%%%%%%%%%%%%%%%%%%%%%%%%%%%%%%%%%%%%%%%%%%%%%%%%%%%%%%%%%%%%%%%%%%%%

The analysis appearing in \cite{Andries:2001vi} instead is based on a {\em single} microscopic variable $\zeta$ which accounts for all the internal energy of the polyatomic gas. Thus, $f=f(x,v,\zeta,t)$, where $\zeta\in \R^+$, such that the internal energy per unit mass due to the rotational and vibrational modes is $\epsilon=\zeta^{2/r}$, where $r$ is the number of internal degrees of freedom, which is related to the  constant of polytropic gases $\gamma$, which is the ratio of the specific heats at constant pressure and constant volume, as $\gamma= \tfrac{r+5}{r+3}$. 

The macroscopic density, momentum and energy of the gas are given by
\begin{align}\label{e:macroPolyPerth}
\rho(x,t) & = m \int_{\R^d} \int_{\R^+} f(x,v,\zeta,t) \, \D v\, \D \zeta. \\
\rho u (x,t) & = m \int_{\R^d} \int_{\R^+} v f(x,v,\zeta,t) \, \D v\, \D \zeta. \nonumber \\
\tfrac12 \rho u^2 (x,t) + \rho e (x,t) & = m \int_{\R^d} \int_{\R^+}  \big( \tfrac12 v^2 + \zeta^{2/r}\big) f(x,v,\zeta,t) \, \D v\, \D \zeta. \nonumber \\
\end{align}

The internal energy is considered as the sum of two contributions,
\[
\rho e_t= m \int_{\R^d} \int_{\R^+}  \tfrac12 (v-u)^2 f(x,v,\zeta,t) \, \D v\, \D \zeta, \qquad \rho e_r= m \int_{\R^d} \int_{\R^+} \zeta^{2/r} f(x,v,\zeta,t) \, \D v\, \D \zeta,
\]
which give a traslational and a rotational temperature, namely
\[
T_t=\tfrac{2m}{d}e_t, \qquad T_r=\tfrac{2m}{r}e_r.
\]
Further, the model contains an {\em equilibrium} temperature $\Teq$ which is defined through the total internal energy,
\[
\Teq= \tfrac{2m}{d+r}e, \qquad \Longrightarrow \quad (d+r) \Teq = d T_t+rT_r.
\]
Further, a mechanism is needed to relax the partial temperatures on a global equilibrium temperature. This is realized through a relaxation temperature $T_{\mbox{\scriptsize rel}}$ which is defined through the convex combination 
$T_{\mbox{\scriptsize rel}}=\theta \Teq + (1-\theta) T_r$, with $0\leq \theta\leq 1$. Then, if the model in \cite{Andries:2001vi} is restricted for simplicity to the BGK case, the Maxwellian in the BGK operator is given by
\[
\M = \frac{n}{\lambda_r} \Big( \frac{m}{2\pi T}\Big)^{d/2}\Big( \frac{m}{ T_{\mbox{\scriptsize rel}}}\Big)^{r/2}\mbox{exp}\Big(  -\frac{m(v-u)^2}{2T}\Big)
\mbox{exp}\Big(  -\frac{\zeta^{2/r}}{T_{\mbox{\scriptsize rel}}}\Big),
\]
where $\lambda_r$ is the normalization constant $\lambda_r=\int \mbox{exp}(-\zeta^2/r) \, \D \zeta$. The relaxation of $T_t$ and $T_r$ towards $\Teq$ is given by a dependence on $\theta$ introduced in the relaxation time, and the BGK equation is modified to
\[
\transport{f} =\frac{1}{\tau(x,t,\theta)}\Big( \M - f\Big).
\]
In \cite{Andries:2001vi}, an $\Htheo$ theorem is proved for this model, and also the possibility of using a modified Chu's reduction to get rid of the internal energy microscopic variable. 

The two models just described have several similarities, but one can argue that the model in \cite{Bernard2018Poly} is more intuitive, because of the physical meaning of the internal degrees of freedom, and because the different relaxations are more readable.

%%%%%%%%%%%%%%%%%%%%%%%%%%%%%%%%%%%%%%%%%%%%%%%%%%%%%%%%%%%%%%%%%%%%%%%%%%%%%%%%
%%%%%%%%%%%%%%%%%%%%%%%%%%%%%%%%%%%%%%%%%%%%%%%%%%%%%%%%%%%%%%%%%%%%%%%%%%%%%%%%
\section{BGK for mixtures}
%%%%%%%%%%%%%%%%%%%%%%%%%%%%%%%%%%%%%%%%%%%%%%%%%%%%%%%%%%%%%%%%%%%%%%%%%%%%%%%%
%%%%%%%%%%%%%%%%%%%%%%%%%%%%%%%%%%%%%%%%%%%%%%%%%%%%%%%%%%%%%%%%%%%%%%%%%%%%%%%%

We now turn to the last class of models I will discuss in these notes. We will consider BGK models for mixtures of different gases. Here the difficulty is to represent the momentum and energy exchanges between the different species as the whole mixture evolves toward equilibrium.

There is a huge literature on this topic. I will restrict the discussion to non-reactive mixtures, for which the numbers of molecules for each component is constant in time. Models that account for chemical reactions can be found for instance in \cite{GroppiSpiga2013reacting}, \cite{Bisi2010} or \cite{Schneider20151075} and references therein. Models for mixtures often are designed with particular ends in mind. For instance \cite{Brull2012derivation} the accent is on the matching of transport coefficients.
See also \cite{Brull2014ESmixtures, BarangerBisiBrull2018PolyMixtures, Bisi2016PolyMixtures} for mixture models based on the ES-BGK paradigm, \cite{KlingPirnerPuppo2018} to include polyatomic effects. More theoretical aspects, as existence and uniqueness of solutions can be found in \cite{KlingPirner2018} or \cite{Pirner2018}.
The discussion that follows is based on \cite{KlingPirner:2017}, \cite{BobylevGroppi2018}, and \cite{AndriesAokiPerthame2002}.

The Boltzmann equation \eqref{e:Boltz} can be written as
\[
\transport{f} = Q^B(f,f),
\]
where $Q(f,f)$ denotes the collision term, and the notation underlines the fact that we are considering binary collisions.

If we have two different species, each described by a different distribution function $f_i, i=1,2$, then we can have only two types of interactions: interactions of a species with itself, giving rise to a collision term $Q_{ii}(f_i,f_i)$ and collisions with the other species, $Q_{ij}(f_i,f_j)$. Thus, for two species
\begin{align}\label{e:Boltz2}
\transport{f_1} & = Q_{11}(f_1,f_1) + Q_{12}(f_1,f_2) \\
\transport{f_2} & = Q_{22}(f_2,f_2) + Q_{21}(f_2,f_1). \nonumber
\end{align}
In other words, for a system with $k$ different species, we obtain a system of $k$ equations, with $k$ collision terms,
\begin{equation}\label{e:Boltzm}
\transport{f_i} = \sum_{j=1}^k Q{ij}(f_i,f_j), \qquad i=1,\dots,k.
\end{equation}
Therefore, the natural extension of this setting to the BGK operator is
\begin{equation}
\transport{f_i} = \sum_{j=1}^k \nu_{i,j} n_j({\M}_{i,j} -f_i), \qquad i=1,\dots,k.
\end{equation}

%%%%%%%%%%%%%%%%%%%%%%%%%%%%%%%%%%%%%%%%%%%%%%%%%%%%%%%%%%%%%%%%%%%%%%%%%%%%%%%%
\subsection{BGK mixture models mimicking multispecies Boltzmann}
%%%%%%%%%%%%%%%%%%%%%%%%%%%%%%%%%%%%%%%%%%%%%%%%%%%%%%%%%%%%%%%%%%%%%%%%%%%%%%%%
For simplicity, we will consider only two species. We obtain the model in \cite{KlingPirner:2017}:
\begin{align}\label{e:BGK2}
\transport{f_1} & = \nu_{1,1} n_1({\M}_{1} -f_1) + \nu_{1,2} n_2({\M}_{1,2} -f_1)\\
\transport{f_2} & = \nu_{2,2} n_2({\M}_{2} -f_2) + \nu_{2,1} n_1({\M}_{2,1} -f_2). \nonumber
\end{align}
The first term in both equations accounts for the interactions of each species with itself. Since, if the two species did not interact, i.e. $\nu_{1,2}=\nu_{2,1}=0$, we would still have off equilibrium effects, the inner species Maxwellians ${\M}_i$ must be the same we would have for the case of a single species, or
\begin{equation}\label{e:innerMaxw}
{\M}_i(x,v,t) =  \frac{n_i(x,t)}{(2\pi T_i(x,t)/m_i)^{d/2}} {\Large e^{-\frac{m_i(v-u_i(x,t))^2}{2T_i(x,t)}}}, \qquad i=1,2.
\end{equation}
The remaining Maxwellians contain the effects of the interspecies interactions. They drive the system to a state in which  the single species velocity and temperature relax to a mixture velocity and temperature: 
\begin{equation}\label{e:interMaxw}
{\M}_{i,j}(x,v,t) =  \frac{n_i(x,t)}{(2\pi T_{i,j}(x,t)/m_i)^{d/2}} e^{-\frac{m_i(v-u_{i,j}(x,t))^2}{2T_{i,j}(x,t)}}, \qquad i=1,2, j\neq i.
\end{equation}
First, we turn to the collision frequencies. 
%Let
%\begin{equation}\label{e:epsilon}
%\epsilon = \frac{\nu_{1,2}}{\nu_{2,1}}, \qquad \mbox{with}\quad 0<\epsilon\leq 1
%\end{equation}
%measure the ratio between the interspecies collision frequencies. Note that  the restriction in \eqref{e:epsilon} can always be obtained exchanging the indices of the two species.
We observe that close to equilibrium all temperatures will be of the same order of magnitude, while the thermal speeds of the particles will be ${u_T}_i \approx \sqrt{2T_i/m_i}$. The ratio between the thermal speeds of the two species is then given by ${u_T}_i/{u_T}_j \approx \sqrt{m_j/m_i}$. The two species may be characterized by very different masses, as in the case of plasmas, where if one takes the species one to coincide with the protons  and species 2 with electrons $m_2/m_1\simeq 1/2000<<1$. In this case one typically has
\begin{equation}\label{e:collfreq}
\nu_{21}\approx \nu_{22}\approx \sqrt{\frac{m_1}{m_2}}\nu_{11}\approx \frac{m_1}{m_2} \nu_{12},
\end{equation}
so that we can take $\nu_{1,2}/\nu_{2,1}=m_2/m_1$. For more details, see \cite{KlingPirner:2017}. In general, one can assume that the collision frequencies $\nu_{ij}$ can be measured or estimated, to reduce the number of free parameters.

The construction of the Maxwellians ensures that mass is conserved. In fact, computing the integral in velocity space of equations \eqref{e:BGK2}, with the Maxwellians defined in \eqref{e:innerMaxw} and \eqref{e:interMaxw}, one easily obtains
\[
\partial_t \rho_i + \nabla (\rho_i u_i) = 0, \qquad i=1,2.
\]
In the case of chemical reactions, when the mass of each species is not conserved, the definitions of the interspecies Maxwellians must be modified to permit an exchange of mass between the two species, see for instance \cite{Bisi2010}. In all cases, the decay towards equilibrium of the mixture is based on an exchange of momentum and energy between the two species.

Now, to continue, one needs assumptions on the structure of the macroscopic quantities appearing in the definition of the Maxwellians. 
\begin{assumption}\label{Ass:speed}
The mixture velocity $u_{12}$ is a convex combination of the single species macroscopic speeds:
\begin{equation}\label{e:intraspeed}
u_{12} = \delta u_1+ (1-\delta) u_2, \qquad 0\leq \delta \leq 1.
\end{equation}
\end{assumption}
The assumption on the mixture temperature is slightly more complex because energy exchange is due not only to a heat flux between the different species, caused by their temperature differences, but also to a transfer of kinetic energy at the macroscopic level.
\begin{assumption}\label{Ass:temp}
The mixture temperature $T_{12}$ is a convex combination of the single species macroscopic temperatures, plus a term which accounts for the exchange of macroscopic kinetic energy:
\begin{equation}\label{e:intratemp}
T_{12} = \alpha T_1+ (1-\alpha) T_2 + \gamma|u_1-u_2|^2, \qquad 0\leq \alpha \leq 1, \gamma>0.
\end{equation}
Note that $\alpha$ is a pure number, while $\gamma$ has the dimensions of mass.
\end{assumption}
Typically, all parameters comparing in the assumptions will depend on the momentum and energy transfer occurring at the microscopic level, when molecules from different species collide, and therefore one expects that they will be linked to the mass ratio between the two species.

Once $u_{12}$ and $T_{12}$ have been set using Assumptions \ref{Ass:speed} and \ref{Ass:temp}, the remaining quantities $u_{21}$ and $T_{21}$ are fixed imposing momentum and energy conservation for the whole mixture. For momentum
\[
m_1\int_{\R^d} \nu_{1,2}n_2\left( {\M}_{1,2}-f_1\right)v + m_2\int_{\R^d} \nu_{2,1}n_1\left( {\M}_{2,1}-f_2\right)v = 0,
\]
while for the energy
\[
m_1\int_{\R^d} \nu_{1,2}n_2\left( {\M}_{1,2}-f_1\right)v^2 + m_2\int_{\R^d} \nu_{2,1}n_1\left( {\M}_{2,1}-f_2\right)v^2 = 0,
\]
where I have used the fact that ${\M}_i$ and $f_i$ share the same velocity and temperature, so the inner species interactions - correctly - do not produce a macroscopic flux of momentum and energy.
One easily obtains
\begin{equation}\label{e:intraspeed2}
u_{2,1} = \left( 1- \varepsilon(1-\delta)\right)u_2 + \varepsilon(1-\delta)u_1, \qquad \varepsilon = \frac{m_1\nu_{1,2}}{m_2\nu_{2,1}}.
\end{equation}
Note that $\varepsilon>0$, and if $\varepsilon<1$, this is still a convex combination of $u_1$ and $u_2$. On the other hand, if $\varepsilon>1$, it is enough to exchange the roles of $f_1$ and $f_2$. The formulas for the two interspecies velocities become completely symmetric if  $\varepsilon=1$. For the interspecies temperature, the expression is more complicated, and one finds
\begin{equation}\label{e:intraTemp2}
\begin{split}
T_{2,1}  &= \left(1-\frac{\nu_{1,2}}{\nu_{2,1}}(1-\alpha) \right) T_2 + \frac{\nu_{1,2}}{\nu_{2,1}}(1-\alpha) T_1 -\gamma \frac{\nu_{1,2}}{\nu_{2,1}}(u_1-u_2)^2 \\
& + \frac{\varepsilon}{d\nu_{2,1}}(1-\delta) \left[ m_2\nu_{2,1}- m_1\nu_{1,2}+ \delta(m_2\nu_{2,1}+m_1\nu_{1,2}) \right](u_1-u_2)^2.
\end{split}
\end{equation}
If we assume $m_2>m_1$ and the relations between the collision frequencies \eqref{e:collfreq} hold, then the first part of the equation is again a convex combination of $T_1$ and $T_2$. Further, the expression in the square parenthesis is positive, so $T_{2,1}>0$ if $\gamma$ is small enough. 

\begin{remark}
If $u_1=u_2=U$, then clearly $u_{1,2}=u_{2,1}=U$ and there is neither momentum nor kinetic energy exchange between the species. If further $T_1=T_2=T$, then $T_{1,2}=T_{2,1}=T$ and there are no heat exchanges.
\end{remark}

For the space homogeneous case, in \cite{KlingPirner:2017} we prove that $f_i(v,t) \geq 0$ for all times, provided the initial data $f_i(v,t=0)\geq 0, i=1,2$. For the general case, see \cite{KlingPirner2018}. Further, if $f_1$ and $f_2$ are positive, all temperatures are also positive, under mild conditions on the coefficient $\gamma$.

The approach described above, developed in \cite{KlingPirner:2017} {\em assumes} that the interspecies velocities and temperatures can be written as \eqref{e:intraspeed} and \eqref{e:intratemp}. Further, the model is complete when the 4 collision frequencies $\nu_{i,j}, i,j=1,2$ are known, together with the parameters $\delta, \alpha$ and $\gamma$. This approach has been further studied in \cite{BobylevGroppi2018}. The idea there is to {\em compute} the parameters $\delta, \alpha$ and $\gamma$, imposing that momentum and energy transfers between the species  in the BGK setting reproduce the macroscopic momentum and energy exchanges derived from the full Boltzmann equation. Thus
\[
\begin{split}
\nu_{i,j} n_j & \int_{\R^d}[{\M}_{i,j}(v) -f_i(v)] \;\phi(v)\;\D v \\ & = \int_{\R^d} \int_{\R^d}\int_{S^+}B_{i,j}(|v-v_*|,\hat{n})[ f_i(v')f_j(v_*') -  f_i(v)f_j(v_*)] \; \phi(v)\;\D v_*\; \D v\;\D\omega,
\end{split}
\]
where $\phi(v)$ denotes one of the collision invariants, i.e. $\phi(v)= 1, v, v^2$.
Note that this request results automatically in mass, momentum and energy conservation, because these conservation principles hold at the Boltzmann level. Furthermore, the interspecies velocities and temperatures are now {\em given} quantities, provided one is able to compute the right hand side. More precisely, let $Q^B_{i,j}$ be the Boltzmann collision term between the species $i$ and $j$, then
\[
\nu_{i,j}n_jn_i(u_{i,j}-u_i) = \int_{\R^d} Q_{i,j}^B v\; \D v
\]
and 
\[
\nu_{i,j}n_jn_i\left(d\frac{T_{i,j}-T_i}{m_i}+ ||u_{i,j}||^2-||u_i||^2\right) = \int_{\R^d} Q_{i,j}^B v^2\; \D v,
\]
which define $u_{i,j}$ and $T_{i,j}$ in terms of integrals of the Boltzmann collision term.

This task can be carried out exactly in the case of Maxwellian molecules, when, as in BGK, the collision cross section does not depend on the relative speed between the particles. In this case, the integrals on the right hand side can be explicitly computed, giving
\[
\nu_{i,j}n_jn_i(u_{i,j}-u_i) = -\frac{m_j}{m_i+m_j}\lambda_{i,j}n_in_j (u_i-u_j)
\]
for the velocity, while for the temperature
\[
\begin{split}
\nu_{i,j}n_jn_i  &\left(d\frac{T_{i,j}-T_i}{m_i}+ ||u_{i,j}||^2-||u_i||^2\right)\\ & = \frac{-2m_j}{(m_i+m_j)^2}\lambda_{i,j}n_in_j [d(T_i-T_j)+ (m_iu_i+m_ju_j)(u_i-u_j)].
\end{split}
\]
The parameters $\lambda_{i,j}$ characterize the interaction, and they are given by
\[
\lambda_{i,j} = \int_{S^+}B_{i,j}(|v-v_*|,\hat{n}) \D\omega = \lambda_{j,i}.
\]
In this way, it is confirmed that the interspecies velocities and temperatures can be written as in \eqref{e:intraspeed} and \eqref{e:intratemp}, at least in the case of Maxwellian molecules, where the coefficients $\delta, \alpha$ and $\gamma$ can be obtained from the expression of the cross sections $\lambda_{i,j}$.

For more general cases, the exchange terms in BGK cannot represent exactly the exchange terms obtained with the Boltzmann model. Here Bobylev et al. in  
\cite{BobylevGroppi2018} suggest to average the effect of the velocity-dependent cross section, choosing to reproduce exactly one global feature. 

The important advancement obtained in \cite{BobylevGroppi2018} is that this work gives tools to compute the coefficients in the interspecies exchange terms, since both $u_{i,j}$ and $T_{i,j}$ are completely defined in terms of $u_i, u_j, T_i, T_j$ and the global cross sections $\lambda_{i,j}$.

\begin{remark}{\bf Multispecies mixtures}

\noindent It should be noted that the models described above for a mixture composed of two gases can be extended easily to a multicomponent mixture with $n$ different species. In fact, the structure of the Boltzmann integral implies that
\[
\int_{\R^d} Q_{i,j}^B \, \phi(v) + \int_{\R^d} Q_{j,i}^B \, \phi(v) =0,
\]
for $\phi(v)=1, v, ||v||^2$.
\end{remark}

%%%%%%%%%%%%%%%%%%%%%%%%%%%%%%%%%%%%%%%%%%%%%%%%%%%%%%%%%%%%%%%%%%%%%%%%%%%%%%%%
\subsection{$\Htheo$ theorem for mixtures}
%%%%%%%%%%%%%%%%%%%%%%%%%%%%%%%%%%%%%%%%%%%%%%%%%%%%%%%%%%%%%%%%%%%%%%%%%%%%%%%%

The $\Htheo$ theorem was already proven in \ref{H-theorem} for the standard BGK model. Here we extend the proof to the case of binary mixtures. The $\Htheo$ theorem states that the entropy {\em decreases} with time, and reaches a minimum for $f=\M$. Since in the space homogeneous case, $n$, $u$ and $T$ are constant, then \eqref{e:Htheo} states that the entropy decreases until the system relaxes on the Maxwellian $\M$ which has the same moments of the initial distribution $f(x,v,t=0)$. 

We can prove a similar results for mixtures. Again, we restrict ourselves to the space homogeneous case. For the general case, see \cite{KlingPirner2018}.

\begin{theorem}[$\Htheo$ theorem for mixtures]
Consider the system \ref{e:BGK2} in the space homogeneous case, with $u_{12}$ and $T_{12}$ defined by \eqref{e:intraspeed} and \eqref{e:intratemp}, while  $u_{21}$ and $T_{21}$ are fixed by conservation of total momentum and energy as in \eqref{e:intraspeed2} and \eqref{e:intraTemp2}. Then the total entropy of the system
\begin{equation}
H = \int_{\R^d} f_1 \log f_1 \, \D v + \int_{\R^d} f_2 \log f_2 \, \D v 
\end{equation}
decreases with time.

The minimum of $H$ is reached when both $f_1$ and $f_2$ are Maxwell distributions with the same temperature and macroscopic velocity.
\end{theorem}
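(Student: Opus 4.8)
The plan is to mirror the single-species computation of Theorem~\ref{H-theorem}, differentiating $H$ along the space-homogeneous flow and isolating a manifestly non-positive part. First I would use $(1+\log f_i)\partial_t f_i = \partial_t(f_i \log f_i)$, multiply each equation in \eqref{e:BGK2} by $(1+\log f_i)$, integrate in $v$ and sum over $i\in\{1,2\}$. The constant ``$1$'' contributions vanish, because each of the four Maxwellians carries the density $n_i$ of the $f_i$ it attracts (see \eqref{e:innerMaxw} and \eqref{e:interMaxw}), so $\int({\M}_{i}-f_i)\,\D v = \int({\M}_{i,j}-f_i)\,\D v = 0$. What survives is
\begin{equation*}
\frac{\D H}{\D t} = \sum_{i} \nu_{i,i} n_i \int \log f_i\,({\M}_{i}-f_i)\,\D v + \sum_{i\neq j} \nu_{i,j} n_j \int \log f_i\,({\M}_{i,j}-f_i)\,\D v.
\end{equation*}

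The self-collision terms are dispatched exactly as in Theorem~\ref{H-theorem}: since ${\M}_{i}$ has the same mass, momentum and energy as $f_i$, one has $\int \log {\M}_{i}\,({\M}_{i}-f_i)\,\D v = 0$, and subtracting it gives $-\nu_{i,i}n_i\int(\log f_i-\log{\M}_{i})(f_i-{\M}_{i})\,\D v\le 0$ by convexity of the logarithm. For each cross term I would likewise insert $\log{\M}_{i,j}$, writing
\begin{equation*}
\int \log f_i\,({\M}_{i,j}-f_i)\,\D v = -\int(\log f_i - \log {\M}_{i,j})(f_i-{\M}_{i,j})\,\D v + \int \log {\M}_{i,j}\,({\M}_{i,j}-f_i)\,\D v.
\end{equation*}
The first piece is $\le 0$ by convexity. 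The second is the crux: because $\log{\M}_{i,j}$ is quadratic in $v$, it is a combination of collision invariants, and a direct moment computation, using $\int f_i\,(v-u_i)\,\D v=0$, yields
\begin{equation*}
\int \log {\M}_{i,j}\,({\M}_{i,j}-f_i)\,\D v = -\frac{n_i}{2T_{i,j}}\Big[d\,(T_{i,j}-T_i) - m_i\,|u_i-u_{i,j}|^2\Big].
\end{equation*}

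The main obstacle is to show that the sum of these residuals,
\begin{equation*}
R = -\sum_{i\neq j}\frac{\nu_{i,j}n_i n_j}{2T_{i,j}}\Big[d\,(T_{i,j}-T_i) - m_i\,|u_i-u_{i,j}|^2\Big],
\end{equation*}
is non-positive. Here I would invoke the two conservation laws that \emph{define} $u_{2,1}$ and $T_{2,1}$ in \eqref{e:intraspeed2}--\eqref{e:intraTemp2}: momentum conservation gives $m_1\nu_{1,2}(u_{1,2}-u_1) = -m_2\nu_{2,1}(u_{2,1}-u_2)$, while energy conservation links the four temperature and velocity defects. Combined with the convex-combination structure of Assumptions~\ref{Ass:speed} and~\ref{Ass:temp}, and with the sign hypotheses on $\gamma$ and the mass ordering noted after \eqref{e:intraTemp2} that keep every $T_{i,j}>0$, these identities should let me regroup $R$ into a non-positive expression quadratic in the differences $(u_1-u_2)$ and $(T_1-T_2)$. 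I expect this algebraic regrouping to be the genuinely delicate step, precisely because the awkward $1/T_{i,j}$ weights obstruct a direct appeal to the linear energy identity; it is the point treated in full in \cite{KlingPirner:2017,KlingPirner2018}.

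Finally, for the minimizer I would track when $\D H/\D t=0$. Each convexity bound is strict unless its integrand vanishes, forcing $f_i={\M}_{i}$ and $f_i={\M}_{i,j}$ simultaneously; equality ${\M}_{i}={\M}_{i,j}$ forces $u_i=u_{i,j}$ and $T_i=T_{i,j}$. Feeding $u_i=u_{i,j}$, $T_i=T_{i,j}$ back into \eqref{e:intraspeed}--\eqref{e:intratemp} and \eqref{e:intraspeed2}--\eqref{e:intraTemp2} collapses the two mixture velocities and temperatures onto a common value, so $u_1=u_2$ and $T_1=T_2$: both species are then Maxwellians sharing a single velocity and a single temperature, which is the announced minimizer.
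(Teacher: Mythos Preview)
Your proposal is correct and follows essentially the same route as the paper's proof: split $\partial_t H$ into self-collision terms handled by the single-species convexity argument of Theorem~\ref{H-theorem}, isolate the cross-collision contribution, acknowledge that its non-positivity is the genuinely delicate algebraic step deferred to \cite{KlingPirner:2017}, and then read off the equality case exactly as you do. The only cosmetic difference is that the paper phrases the key technical ingredient as the inequality $\nu_{12}\log T_{1,2}+\nu_{21}\log T_{2,1}\ge \nu_{12}\log T_1+\nu_{21}\log T_2$, whereas you package the same obstruction as the residual $R$ obtained after inserting $\log{\M}_{i,j}$; both formulations point to the same computation in \cite{KlingPirner:2017}.
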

\begin{proof}
We need to evaluate
\begin{align*}
\partial_t H & = \nu_{11}\, n_1\int_{\R^d} \log f_1({\M}_{1} -f_1)\, \D v
             + \nu_{12}\, n_2\int_{\R^d} \log f_1({\M}_{1,2} -f_1)\, \D v \\
             & + \nu_{22}\, n_2\int_{\R^d} \log f_2({\M}_{2} -f_2)\, \D v
             + \nu_{21}\, n_1\int_{\R^d} \log f_2({\M}_{2,1} -f_2)\, \D v,
\end{align*}
where mass conservation has been used.
The proof relies on the fact that for the single species inequality \eqref{e:Htheo} holds, thus you only need to study the sign of the expression
\[
S(f_1,f_2) = \nu_{12}\, n_2\int_{\R^d} \log f_1({\M}_{1,2} -f_1)\, \D v +
     \nu_{21}\, n_1\int_{\R^d} \log f_2({\M}_{2,1} -f_2)\, \D v,
\]
because the remaining terms are already known to be negative. To prove that $S(f_1,f_2)\leq 0$, the inequality
\[
\nu_{12} \log T_{1,2} + \nu_{21} \log T_{2,1} \geq \nu_{12} \log T_{1} + \nu_{21} \log T_{2}
\]
is needed. This is a bit tricky, and uses the expressions for $T_{1,2}$ and $T_{2,1}$, and of the interspecies velocities. All details can be found in \cite{KlingPirner:2017}.

We just underline that $H$ remains constant in time if and only if the single species terms cancel, which implies that $f_1={\M}_1$ and $f_2={\M}_2$, and the interspecies entropy production $S(f_1,f_2)$ is also zero, which requires $f_1={\M}_{1,2}$ and $f_2={\M}_{2,1}$. This in turn implies $u_{1,2}=u_1$, $u_{2,1}=u_2$, $T_{1,2}=T_1$ and $T_{2,1}=T_2$. Finally, applying the equations for the mixture velocities and temperatures to this case, one easily finds $u_2=u_1$ and $T_2=T_1$, which completes the proof.
\end{proof}

The $\Htheo$-theorem for mixtures gives details on the global equilibrium of the system. It states that the system is in equilibrium if and only if both distributions are Maxwellians, with the same macroscopic speed, and with the same temperature. This means also that at equilibrium no net macroscopic momentum or heat exchange between the two components of the mixture can take place.

%%%%%%%%%%%%%%%%%%%%%%%%%%%%%%%%%%%%%%%%%%%%%%%%%%%%%%%%%%%%%%%%%%%%%%%%%%%%%%%%
\subsection{Single collision term models}
%%%%%%%%%%%%%%%%%%%%%%%%%%%%%%%%%%%%%%%%%%%%%%%%%%%%%%%%%%%%%%%%%%%%%%%%%%%%%%%%

In \cite{AndriesAokiPerthame2002}, Andries, Aoki and Perthame published a paper in which the BGK model for a mixture of gases is written with a 
single collision term (AAP model, in the following). This idea has been followed by several researchers, such as Bisi, Brull,  Groppi, Spiga, and others, see for instance \cite{BarangerBisiBrull2018PolyMixtures},  \cite{Bisi2010}, \cite{Bisi2016PolyMixtures} and their references. The AAP model is written as
\[
\partial_t f_i+ v \cdot \nabla f_i = Q_i:= \frac{1}{\tau_i}(M_i-f_i),\quad i=1,\dots,N,
\]
where $N$ is the number of species, $M_i$ are the $N$ Maxwellians depending each on a {\em mixture velocity $\tilde{u}_i$} and a {\em mixture temperature $\tilde{T}_i$} defined in such a way to match the momentum and energy transfer in Boltzmann equation, for the hard sphere model. This choice is due to the fact that in the hard sphere model the Boltzmann cross-section does not depend on the particles velocities, and, for this reason, it is closer to the BGK setting.

Writing Boltzmann for the hard sphere model, the AAP model is based on the following equations for the unknowns $\tilde{u}_i$ and $\tilde{T}_i$:
\begin{align*}
\int m_i v Q_i& =  2\sum_j\mu_{ij}\chi_{ij}n_in_j (u_j-u_i) = \frac{1}{\tau_i}m_i n_i (\tilde{u}_i-u_i) \\
\int \tfrac12 m_i v^2 Q_i& = 
 4 \sum_j \chi_{ij}n_j \frac{\mu_{ij}}{m_i+m_j}\left[\tfrac32(T_j-T_i)+ \tfrac{m_j}{2}(u_j-u_i)^2\right] \\
& \qquad = \frac{1}{\tau_i}n_i\left[\tfrac12 m_i(\tilde{u}_i^2-u_i^2) + \tfrac32 (\tilde{T}_i-T_i)\right]  
\end{align*}
where 
\[
\mu_{ij}= \frac{m_i m_j}{m_i+m_j}
\]
is the reduced mass for the species $i$ and $j$, and the terms $\chi_{ij}=\int_{S^+}B_{ij}(\hat{n})$ are the mixed collision cross sections. 
The relations above define the new quantities $\tilde{u}_i$ and $\tilde{T}_i$, for each species.

The AAP single collision term mixture model can also be proven to satisfy an H-theorem. Moreover, the AAP kinetic model for mixtures was constructed in order to satisfy the {\em indifferentiability principle}. For the sake of simplicity, consider two species, such that  $m_i=m_j=m$ and $\tau_i=\tau_j=\tau$, which means that the two species are dynamically identical. A model is said to satisfy the indifferentiability principle if, in the case of identical particles,  the  model can be reduced to a {\em single} species BGK model for the unknown
\[
f= f_1+f_2, \quad \partial_t f + v\cdot \nabla f=\frac{1}{\tau}(M-f).
\]
In fact one can prove that in this case
\[
\partial_t f_i + v\cdot \nabla f_i=\frac{1}{\tau}(n_i\tilde{M}-f_i)
\]
with $\tilde{M}= \sqrt{\big(\frac{m}{2\pi\tilde{T}}\big)^d}\exp{\left[-(v-\tilde{u})^2/(2\tilde{T}/m)\right]}$, because $\tilde{u_i}= \tilde{u_j}=\tilde{u}$ and $\tilde{T_i}= \tilde{T_j}=\tilde{T}$. Then $M$ is simply $M=\sum_in_i\tilde{M}=n\tilde{M}$.
Thus, the AAP model for identical particles reduces to a single species model, when all microscopic parameters coincide.

It is easy to see that for the mixture model I described before, the indifferentiability principle holds {\em only} at equilibrium. In  fact the evolution equations for identical particles are
\begin{align*}
 \partial_t{f_1} + v \cdot \nabla f_1 & = \frac{n_1+n_2}{\tau} \left(\frac{n_1}{n_1+n_2}{\M}_1+\frac{n_2}{n_1+n_2}{\M}_{1,2}-f_1\right)   \\
 \partial_t{f_2} + v \cdot \nabla f_2 & = \frac{n_1+n_2}{\tau} \left(\frac{n_2}{n_1+n_2}{\M}_2+\frac{n_1}{n_1+n_2}{\M}_{2,1}-f_2\right) 
 \end{align*}
and these reduce to a single species BGK model only when $u_1=u_2=u$, so that $u_{1,2}=u_{2,1}=u$ 
and $T_1=T_2=T$, which implies $T_{1,2}=T_{2,1}=T$, that is, only when the {\em global} equilibrium is achieved.
In other words the model in \cite{KlingPirner:2017}, when the particles are identical, reduces to a single species model only when the global equilibrium is reached.

%%%%%%%%%%%%%%%%%%%%%%%%%%%%%%%%%%%%%%%%%%%%%%%%%%%%%%%%%%%%%%%%%%%%%%%%%%%%%%%%
%%%%%%%%%%%%%%%%%%%%%%%%%%%%%%%%%%%%%%%%%%%%%%%%%%%%%%%%%%%%%%%%%%%%%%%%%%%%%%%%
\section{Asymptotic Preserving schemes for BGK models and mixtures}
%%%%%%%%%%%%%%%%%%%%%%%%%%%%%%%%%%%%%%%%%%%%%%%%%%%%%%%%%%%%%%%%%%%%%%%%%%%%%%%%
%%%%%%%%%%%%%%%%%%%%%%%%%%%%%%%%%%%%%%%%%%%%%%%%%%%%%%%%%%%%%%%%%%%%%%%%%%%%%%%%

In this section I'll discuss the main aspects of the numerical integration of \eqref{e:BGK}. I will start from the issues at the basis of the numerical integration of the standard BGK model. Then I will illustrate how the same techniques can be extended to the case of the other models described in these notes: ES-BGK, polyatomic, and mixtures. I will consider finite volume schemes, starting with a simple first order discretization, in one space dimension, and then sketch how the extension to high order schemes can be carried out. 
The one dimensional BGK equation \eqref{e:BGK} reduces to
\begin{equation}\label{e:1dBGK}
\partial_t f + v_1 \partial_x f = \nu n \, (\M-f),
\end{equation}
where with $v_1$ I denote the first component of the vector $v=[v_1,v_2,\dots v_d]$. Note that the integrals giving the moments of $f$ will still be computed on the whole $\R^d$, although it is possible to drop the dependence on $v_2\dots v_d$ using two distribution functions, each depending on $v_1$ and $x$ only, as described in \S \ref{Chu}. More details can be found in \cite{Chu1965}.

The first task is the construction of a grid in space and velocity. For simplicity I will consider a uniform grid, with mesh width $\Dx$ in space, $\Dv$ in velocity and $\Dt$ in time. Let $x_j=j\Dx, \, j\in \mathbb{Z}$, $v_k=k\Dv,\, k\in \mathbb{Z}^d$ and $t^n=n\Dt, \, n \in \mathbb{N}$ denote the grid points in space, velocity and time, for the one dimensional case. 

%%%%%%%%%%%%%%%%%%%%%%%%%%%%%%%%%%%%%%%%%%%%%%%%%%%%%%%%%%%%%%%%%%%%%%%%%%%%%%%%
\subsection{Space and velocity discretization}
%%%%%%%%%%%%%%%%%%%%%%%%%%%%%%%%%%%%%%%%%%%%%%%%%%%%%%%%%%%%%%%%%%%%%%%%%%%%%%%%

We start from the expression for a semidiscrete in time numerical scheme, in one space dimension. Here $f_{j,k}=f(x_j,v_k,t)$ denotes the function $f$ evaluated at the space velocity grid point $(x_j,v_k)$ at a fixed time $t$, $x_j$ in $\R$, $v_k\in \R^d$. Then
\[
\partial_t f_{j,k} = - (v_k)_1 \left. D_x f\right|_{j,k}(t) + \nu n\left( {\M}_{j,k}(t) - {f}_{j,k}(t) \right),
\]
where $D_x f|_{j,k}$ denotes the discrete space derivative at $(x_j,v_k)$, and $(v_k)_1$ is the first component of the vector $v_k$. Once the discretizations in space and velocity have been caries out, the BGK equation reduces to a system of ODE's in time. Since $v_k$ is constant with respect to $x$, each equation in the system of ODE's is a linear advection equation, with constant scalar speed $(v_k)_1$, plus a source term. We can approximate the space derivative with the upwind scheme, see \cite{LeVeque:book} for an introduction to the construction of numerical methods for equations of this type.
The upwind space discretization yields 
\begin{equation*}
\begin{split}
\partial_t f_{j,k}  & =  - \frac{1}{\Dx} \left( (v_k)_1^+ (f_{j,k}(t) - f_{j-1,k}(t)) + (v_k)_1^- (f_{j+1,k}(t) - f_{j,k}(t)) \right) \\
& + \nu n\left( {\M}_{j,k}(t) - {f}_{j,k}(t) \right),
\end{split}
\end{equation*}
where $v^+=\max(v,0)$ and $v^-=\min(v,0)$. Since $v^+=\tfrac12 (|v|+v)$ and $v^-=\tfrac12 (v-|v|)$, we can rewrite the expression above as
\begin{equation}\label{e:semidis}
\begin{split}
\partial_t f_{j,k}  & =  - (v_k)_1\frac{1}{2\Dx} \left( f_{j+1,k}(t) - f_{j-1,k}(t)\right) +  \frac{1}{2\Dx}|v_k|_1 \left(f_{j+1,k}(t) - 2f_{j,k}(t)+ f_{j-1,k}(t) \right) \\
& + \nu n\left( {\M}_{j,k}(t) - {f}_{j,k}(t) \right).
\end{split}
\end{equation}
The second term on the right hand side is a discretization of a term of the form $f_{xx}$. This  is a viscosity term, which is not present in the exact equation. It is an {\em artificial diffusion} term needed to obtain a stable scheme. Its effect is that steep profiles will be smoothed out. To reduce the unwanted artificial diffusion, while maintianing stability, a higher order accurate space discretization is needed. This aspect will be considered later. 
 Now there are two problems:
\begin{enumerate}
\item The computation of the Maxwellian requires the evaluation of the moments of $f$, which, in the discrete case, must be performed by quadrature. How should the quadrature be carried out, in order to guarantee that the moments of $f$ still satisfy the conservation laws \eqref{e:MassCons}, \eqref{e:MomCons} and \eqref{e:ECons}?
\item Once  the fully discrete system is obtained, one would like that the resulting numerical scheme, as the equilibrium is approached, should become a consistent numerical scheme for the equilibrium equations, which in this case are \eqref{e:Euler}. Beside, equilibrium is reached in a time scale $\sim \tau =(\nu n)^{-1}$. But when $\tau \to 0$,  the system becomes stiff. Is it possible to choose a time advancement method in such a way that the time step $\Delta t$ should not be chosen so small as to resolve the transient to equilibrium, but still guarantee that the exact equilibrium is obtained? This is idea will be formalized with the {\em asymptotic preserving} property. In other words, it is desirable that the correct equilibrium is found, even for $\Delta t >> \tau$.
\end{enumerate}

In applications, usually the space domain is limited, say $x\in \Omega \in \R^d$, where $\Omega$ is a bounded set with (hopefully) a smooth enough boundary. Then one needs to apply boundary conditions on the boundary of $\Omega$. I will not deal with boundary conditions here, but surely since the computation must be carried out on a bounded set, I will suppose that the mesh consists of a finite number $N$ of space cells. Consult \cite{Bernard2015BC} for a discussion on the technical aspects of boundary conditions for BGK type models, or \cite{FilbetJin2012bc} for treating high order accurate boundary conditions. 

Problems are of different kind for the velocity space, where all integrals needed to compute the moments of $f$ are defined on the whole of $\R^d$. For the integrals in velocity space, we need a quadrature rule.

Let 
\begin{equation}\label{e:quad}
<f> = \sum_k f(x,v_k,t) w_k \simeq  \int_{\R^d} f\, \D v
\end{equation}
be the quadrature rule chosen to approximate the integrals in velocity space, where $v_k$ are the nodes of the quadrature rule, and $w_k$ the corresponding weights. Suppose the total number of nodes is $K$. Then, we need to update $f$ only at the $K$ velocity nodes, in order to be able to evolve the moments, and therefore the Maxwellian. This means that the semidiscrete system \eqref{e:semidis} consists of $K N$ first order differential equations. In three dimensional computations, is should be clear that this number can be very large. In a typical application, one could have 50 nodes per direction in velocity space and 100 nodes per unit length in each space direction. This would give $50^3\times 100^3 =125 \times 10^9$ grid nodes on the unit cube. Therefore, it is important to reduce as much as possible the number of velocity nodes. 

One possibility is to use Gauss Hermite quadrature. This quadrature integrates functions on unbounded intervals of the form
\[
\int_{\R} s^\ell f(s) \, e^{-s^2} \, \D s \approx \sum_{k=1}^K s_k^\ell f (s_k) w_k,
\]
using well chosen nodes $s_k$ and weights $w_k$. This formula is exact for polynomials of degree $\ell$ with $2\ell -1=K_x$, where $K_x$ is the number of nodes in one direction of velocity space. This type of quadrature is particularly suited, apparently, for functions close to be Maxwellians, since to compute the moments of $f$ we need to be able to integrate exactly functions of the form above for $\ell=0, 1, 2$, and thus only three nodes in each direction in velocity space would be enough to ensure an exact evaluation of the moments. Moreover, for $K_x=3$, the velocity nodes are relatively small, which means that the system \eqref{e:semidis} is not stiff with respect to velocity space. The problem, however is that the weight function in Gaussian quadrature is the unknown Maxwellian, and it is centered in the unknown macroscopic velocity $u$, with standard deviation given by the unknown temperature $T$. Thus the location of the nodes is unknown, because they are distributed around the local value of the velocity $u(x,t)$, at a distance from $u$ that depends on the local value of $T(x,t)$, thus they change in space and time. For these reasons, Gauss-Hermite quadrature is not the most common strategy to compute the moments of $f$. See any text in numerical analysis, as \cite{StoerBulirsch} for Gaussian quadrature and its properties.

Another approach uses the trapezoid rule. In this case, it is necessary to fix bounds $-L,L$ such that all velocity nodes are contained in the box $[-L,L]^d$. The bounds may depend on time as in \cite{Bernard2014LocalGrids}, but usually it is convenient to keep a uniform grid spacing $\Dv$. In fact, the quadrature error for the trapezoid rule with a uniform grid for a function $f\in C^m(-L,L)$ in one direction is given by the Euler-McLaurin formula:
\begin{align}\label{eq:Euler_mclaurin}
\int_{-L}^L f(v)\; \D v & - \frac{\Dv}{2}\sum_k\left( f_k+f_{k+1}\right) 
            = 
         \sum_{l=1}^{m/2-1} \Dv^{2l} \frac{C_l}{(2l)!}\left( f^{(2l-1)}(L) - f^{(2l-1)}(-L)\right) \\
            & +  \Dv^{m} \frac{2C_{m}L}{(m)!} f^{(m)}(\xi), \qquad \xi \in (-L,L). \nonumber
\end{align}
where $C_m$ and all the $C_l$'s are suitable coefficients.
For the proof see \cite{StoerBulirsch}. 
The formula above shows that if the function $f$ is $2L$ periodic, together with its derivatives up to the order $m$, the boundary terms in the error cancel, and convergence becomes very fast. In our case, it is reasonable to suppose that $f$ decays very fast at infinity (remember: we are close to equilibrium, and $f$ is approaching a Maxwellian). So, if one takes $L$ large enough, $f$ and its derivatives will be very small at $\pm L$, and $f$ can be supposed to be $2L$ periodic, together with its derivatives.

However, once the quadrature rule in velocity space is chosen, the discrete moments of  $f$ will not be exact. In particular
%\[
%\ket{\left( \begin{array}{c} 1 \\ v \\ \tfrac12 v^2 \end{array}\right) f} - 
%\ket{\left( \begin{array}{c} 1 \\ v \\ \tfrac12 v^2 \end{array}\right) \M} \neq 0.
%\]
\[
\ket{\phi(v) f} - 
\ket{\phi(v) \M} \, \neq \, 0, \qquad \phi(v) = [1, v, \tfrac12 v^2]^T.
\]

This means that at the discrete level, the right hand side of the conservation laws \eqref{e:MassCons}, \eqref{e:MomCons} and \eqref{e:ECons} will not be zero, and as $\tau \to 0$, with $\tau=1/\nu n$, one does not obtain the Euler equations \eqref{e:Euler}. This prompts the need for the computation of a discrete Maxwellian $\disM$ such that
%\[
%\ket{\left( \begin{array}{c} 1 \\ v \\ \tfrac12 v^2 \end{array}\right) f} - 
%\ket{\left( \begin{array}{c} 1 \\ v \\ \tfrac12 v^2 \end{array}\right) \disM} = 0.
%\]
\[
\ket{\phi(v) f} - 
\ket{ \phi(v) \disM} = 0,
\]
which are $d+2$ non linear equations.
The existence of the discrete Maxwellian $\disM$ and the way to compute it have been proposed in \cite{Mieussens2000Discrete}. The main result is that a $d+2$ dimensional vector $\mathbf{\alpha}$ exists, such that 
\begin{equation}\label{e:DiscreteMaxwellian}
\sum_k  w_k \phi(v_k) \left( f(x,v_k,t) -\exp( \alpha(x,t) \cdot[1, (v_1)_k,\dots,  (v_d)_k, \tfrac12 ||v_k||^2] \right)= 0,
\end{equation}
with $\phi(v)=[1, (v_1),\dots,  (v_d), \tfrac12 ||v||^2]$ as usual to denote the collision invariants. 
This defines, for any grid point in space, and any time, a system of $d+2$ non linear algebraic equations in the $d+2$ unknowns $\mathbf{\alpha}$, which has a unique solution, provided the speed lattice is large enough. We will call $\Mop$ the non linear operator which maps the discrete moments $U=\ket{\phi(v) f}$ into the discrete Maxwellian obtained solving the system \eqref{e:DiscreteMaxwellian}, $\Mop(U)=\disM$. For more details, see \cite{Mieussens2000Discrete}, while in \cite{Mieussens2000DiscreteCyl} one can find a fully implicit version of the resulting scheme. 
See also \cite{Gamba2009}.

In \cite{Mieussens2000Discrete} the proof of the existence of the discrete Maxwellian and its explicit construction are based on the characterization of the Maxwellian as the solution of the constrained minimization problem \eqref{e:minimo}, where the space of the constraints of the exact Maxwellian is substituted with the approximate constrains resulting from the discrete moments of $f$.

%%%%%%%%%%%%%%%%%%%%%%%%%%%%%%%%%%%%%%%%%%%%%%%%%%%%%%%%%%%%%%%%%%%%%%%%%%%%%%%%
\subsection{Time discretization and AP schemes}
%%%%%%%%%%%%%%%%%%%%%%%%%%%%%%%%%%%%%%%%%%%%%%%%%%%%%%%%%%%%%%%%%%%%%%%%%%%%%%%%

Finally, we must consider the time discretization. We rewrite \eqref{e:BGK} as
\begin{equation}\label{e:BGKtau}
\transport{f} = \frac{1}{\tau} (\M -f), \qquad \tau = \frac{1}{\nu n}
\end{equation}
to underline the fact that the system is stiff when the relaxation time $\tau \to 0$, and the system approaches equilibrium. Then, if one wants to deal with this regime, the numerical scheme must be implicit. Implicit schemes permit to use large time steps $\Dt>\tau$, and even $\Dt>>\tau$, without the need to resolve the fast transients which occur in a scale of order $\tau$. However, in the mean time one wants to capture the correct equilibrium. This is the purpose of AP (Asymptotic Preserving) schemes.

Formally, the idea is the following. Let us call $K^{\tau}$ the kinetic model we are considering, depending on the (small) parameter $\tau$. As $\tau \to 0$, the kinetic model approaches the equilibrium $M$, i.e., in our case, the macroscopic equations \eqref{e:Euler}. In symbols we write $K^{\tau} \to M$ as $\tau \to 0$. Suppose further that we have written a convergent scheme for the kinetic model, depending on some grid spacing $h$. We call $K^\tau_h$ the discretization of the model $K^{\tau}$. Since the scheme is convergent, we have $K^\tau_h \to K^\tau$, as $h \to 0$. Then, as $\tau \to 0$, we obtain in the limit the equilibrium model $M$. This is illustrated on the left of Fig. \ref{f:AP}. A scheme is AP if the discretization $K^\tau_h$ becomes, as $\tau \to 0$, a consistent discretization of the macroscopic equations, $M_h$. Thus it is not necessary to resolve the fast scale, i.e. computing the solution of $K^\tau_h$ for $h\to 0$ to have an approximation to the equilibrium $M$. The different limits involved are summarized in figure \ref{f:AP}.

\begin{figure}
\begin{center}
\begin{minipage}{0.45\textwidth}
 \vspace{0pt}
{\Large
\begin{tikzcd}
%  K^{\tau}_h \arrow[r, "\tau \to 0" red] \arrow[d, "h \to 0"]
%    & M_h \arrow[d, "h \to 0" ] \\
%  K^{\tau} \arrow[r,  "\tau \to 0" blue] &M
  K^{\tau}_h  \arrow[d, "h \to 0"]
    & %M_h \arrow[d, "h \to 0" ] 
    \\
  K^{\tau} \arrow[r,  "\tau \to 0" blue] &M
%  K^{\varepsilon}_h \arrow[r, "\varepsilon \to 0"] \arrow[d, red]
%    & M_h \arrow[d, "\psi" red] \\
%  K^{\varepsilon} \arrow[r, red, "\eta" blue] &M
\end{tikzcd}
}
\end{minipage}
%\hfill
\begin{minipage}{0.45\textwidth}
 \vspace{0pt}\raggedright
{\Large
\begin{tikzcd}
%  K^{\tau}_h \arrow[r, red "\tau \to 0" red] \arrow[d ]
%  \arrow[d, "h \to 0" ]  & M_h \arrow[d, "h \to 0"] \\
%  K^{\tau} \arrow[r, red, "\tau \to 0" blue] &M
  K^{\tau}_h \arrow[r, "\tau \to 0" red] \arrow[d, "h \to 0"]
    & M_h \arrow[d, "h \to 0" ] \\
  K^{\tau} \arrow[r,  "\tau \to 0" blue] &M
\end{tikzcd}
}
\end{minipage}
\end{center}
\caption{Illustration of AP schemes. Left: we discretize a kinetic model $K^{\tau}$, obtaining the scheme $K^\tau_h$. As $h \to 0$, the scheme $K^\tau_h$ produces a numerical solution that converges to the exact solution of $K^{\tau}$, which, for $\tau \to 0$, converges to a solution of the macroscopic equations $M$. Right: AP scheme. Here the discretization $K^{\tau}_h$ converges to a discretization of the macroscopic system $M_h$.}\label{f:AP}
\end{figure}
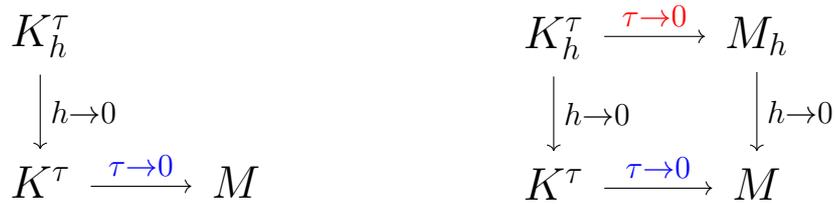

Let $\mom_\Delta$ denote the vector of the discrete moments obtained with the quadrature rule applied to $f$, namely
\[
\mom_\Delta = \ket{\phi(v) f}.
\]
Let $\disM$ be the discrete Maxwellian computed from the discrete moments $\mom_\Delta$. Then a simple semi-implicit time integration is
\begin{equation}\label{e:semiAP1}
f_{jk}^{n+1} = f_{jk}^{n} - \Dt D_x f_{jk}^n + \frac{\Dt}{\tau} \left( {\disM}_{jk}^{n+1} - f_{jk}^{n+1} \right),
\end{equation}
where $D_x$ is the discrete space derivative, for instance, the upwind derivative of \eqref{e:semidis}. The problem is how to compute $\disM$ which of course depends non linearly on $f$. To obtain $\disM$ at the new time step, we use conservation of mass, momentum and energy. Thus
\[
{\mom_{\Delta}}^{n+1}_j = {\mom_{\Delta}}_j^n -\left. \Dt D_x\right|_j \ket{v\,\phi(v) f^n},
\]
where the source term has disappeared, because $f$ and $\disM$ have exactly the same discrete moments. Thus, we compute the updated discrete moments using an explicit integration of the macroscopic equations. Once we have the new moments, we compute the discrete Maxwellian which reproduces the moments ${\mom_{\Delta}}^{n+1}_j$, at each grid point. This is $\disM^{n+1}= \Mop(\mom_{\Delta}^{n+1})$. A simple substitution in \eqref{e:semiAP1} gives
\begin{equation}
f_{jk}^{n+1} = \frac{\tau}{\tau + \Dt}\left(f_{jk}^{n} - \Dt D_x f_{jk}^n  \right)+ \frac{\Dt}{\tau+ \Dt}  {\disM}_{jk}^{n+1} .
\end{equation}
Note that as $\tau \to 0$, the update of the distribution function reduces to $f_{jk}^{n+1} = {\disM}_{jk}^{n+1}$. Then for the {\em new} time step, the equation for the moments with the upwind space derivative becomes
\[
{\mom_{\Delta}}^{n+2}_j = {\mom_{\Delta}}_j^{n+1} -\frac{\Dt}{\Dx} \left( F_{j+1/2}^{n+1}\ - F_{j-1/2}^{n+1} \right), 
\]
where the numerical flux $F_{j+1/2}$ at time $t^{n+1}$ is given by
%\begin{equation}
%F_{j+1/2} = \sum_k v_k^+ \left( \begin{array}{c} 1 \\ v_k \\ \tfrac12 v_k^2 \end{array}\right) \disM(x_j,v_k,t^n) + \sum_k v_k^- \left( \begin{array}{c} 1 \\ v_k \\ \tfrac12 v_k^2 \end{array}\right) \disM(x_{j+1},v_k,t^n),
%\end{equation}
\begin{equation}
F_{j+1/2}^{n+1} = \sum_k v_k^+ \phi(v_k) \disM(x_j,v_k,t^{n+1}) + \sum_k v_k^- \phi(v_k)\disM(x_{j+1},v_k,t^{n+1}),
\end{equation}
which is the kinetic flux vector splitting of classical gas dynamics.
This shows that the scheme is indeed AP, provided one starts the first time step with the data already in equilibrium, or close to equilibrium. More precisely, we say that the initial data are {\em well prepared} if
\begin{equation}\label{e:WellPrepared}
f(x_j,v_k,t=0) = \disM(x_j,v_k,t=0)+O(\tau).
\end{equation}

This approach started, as far as I know, in \cite{CoronPerth}, and was extended to arbitrary high order accuracy in \cite{PieracciniPuppo2007}. Then it was extended and improved by several authors. See the reviews \cite{DiMarcoPareschiActa} for more refined couplings between the explicit and the implicit integration schemes, and \cite{Jin2012review} for applications to different models.

\subsubsection*{AP schemes for mixtures of gases}

How does this technique applies to mixtures? The macroscopic equations for a two component mixture give
\begin{align}
& \partial_t \rho_1 + \nabla \cdot (\rho_1 u_1) = 0 \label{e:MassCons1}\\
& \partial_t \rho_2 + \nabla \cdot (\rho_2 u_2) = 0 \nonumber\\
& \partial_t (\rho_1 u_1) + \nabla \cdot (\rho_1 u_1 \otimes u_1 + \mathbb{P}_1) = m_1 \nu_{1,2}n_2n_1\left( u_{1,2}-u_1\right) \nonumber\\
& \partial_t (\rho_2 u_2) + \nabla \cdot (\rho_2 u_2 \otimes u_2 + \mathbb{P}_2) = m_2 \nu_{2,1}n_2n_1\left( u_{2,1}-u_2\right) \nonumber\\
& \partial_t E_1 + \nabla \cdot ( u_1 E_1 + \mathbb{P}_1u_1 + q_1) = m_1 \nu_{1,2}n_1n_2 
\left( u_{1,2}^2 -u_1^2 + \frac{d}{m_1} (T_{1,2} - T_1) \right)
 \nonumber \\
& \partial_t E_2 + \nabla \cdot ( u_2 E_2 + \mathbb{P}_2u_2 + q_2) = m_2 \nu_{2,1}n_1n_2 
\left( u_{2,1}^2 -u_2^2 + \frac{d}{m_2} (T_{2,1} - T_2) \right)
 \nonumber. 
\end{align}
This time the macroscopic equations have source terms, which are due to the macroscopic exchanges of momentum and heat between the two species. Moreover, these terms are stiff when the collision frequencies are high, thus they should be computed implicitely. On the other hand, we know the expressions of $u_{i,j}$ and $T_{i,j}$ due to the assumptions on the mixture velocities and temperatures and due to the  conservation of total momentum and energy. So, these quantities are known functions of the macroscopic speeds $u_i, i=1,2$ and temperature, $T_i, i=1,2$.

Thus we can modify the AP scheme for the single species, and obtain an AP scheme for the mixure. From $f_i^n(x,v)$, we compute the discrete moments $({\mom_i}_{\Delta})^{n}_j$ for each of the two species. Further, we also compute the pressure tensors and the heat fluxes at time $t^n$. We have enough information to update the transport terms appearing in the left hand side of system \eqref{e:MassCons1}. From the first two equations of the macroscopic system, we find $\rho_i^{n+1}, i=1,2$ and $n_i^{n+1}$ at each grid node. Substituting these updated quantities in the two momentum equations, we have
\begin{align}
& (\rho_1 u_1)^{n+1} =( \dots)^{n} + \Dt \, \nu_{1,2}\rho_1^{n+1}n_2^{n+1}\left( u_{1,2}^{n+1}-u_1^{n+1}\right) \nonumber\\
& (\rho_2 u_2)^{n+1} =( \dots)^{n} + \Dt \, \nu_{2,1}\rho_2^{n+1}n_1^{n+1}\left( u_{2,1}^{n+1}-u_2^{n+1}\right). \nonumber
\end{align}
Applying the known algebraic relations between $u_1, u_2$ and the two mixture velocities $u_{1,2}$ and $u_{2,1}$, we obtain a two by two linear algebraic system, which must be solved at each space grid point, yielding all velocities at the new time level $t^{n+1}$. We substitute these quantities in the energy equations, finding all updated temperatures, again solving a $2 \times 2$ linear algebraic system. 

Finally, since all macroscopic quantities are known, we compute all the discrete Maxwellians ${\disM}_i$ and ${\disM}_{i,j}$ at time $t^{n+1}$. In this fashion, we can compute the analog of \eqref{e:semiAP1} for the two distribution functions, with the source term evaluated at the new time. Thus again, we obtain a numerical method which is implicit in the stiff relaxation terms, and explicit in the transport step.

A similar technique can be applied also to the case of the single collision AAP mixture model, to yield an AP scheme for the update of the distribution functions also in the AAP case.

\subsubsection*{AP schemes for the ES-BGK model} 

The success of the AP time integration for the BGK model is based on the fact that the equilibrium distribution can be computed at the new time step without the need to solve implicitly a stiff system. In fact, when the macroscopic equations are updated, the stiff terms cancel because of the conservation of mass, momentum and energy. The situation is slightly more complicated for the ES-BGK model, because one of the macroscopic quantities needed to write the equilibrium distribution $\G$ is not conserved. However, it is still possible to update $\G$ in a very efficient way, see \cite{FilbetJin2010ES} and \cite{AlaiaThesis}.

Let $\disG$ be the discrete Gaussian computed from the discrete moments $\mom_\Delta$ and the discrete stress tensor $\mathbb{T}_\Delta$, i.e. the tensor $\mathbb{T}$ approximated with a quadrature rule in velocity space. The procedure is very similar to what we already presented for the case of the discrete Maxwellian $\disM$. A semi-implicit time integration for the ES-BGK model is
\begin{equation}\label{e:semiAP_ES}
f_{jk}^{n+1} = f_{jk}^{n} - \Dt D_x f_{jk}^n + \frac{\Dt}{\tau} \left( {\disG}_{jk}^{n+1} - f_{jk}^{n+1} \right),
\end{equation}
where $D_x$ is as before the discrete space derivative, for instance, the upwind derivative of \eqref{e:semidis}. Again, we need to compute $\disG$ which is linked non linearly to $f$. To obtain $\disG$ at the new time step, we use conservation of mass, momentum and energy. As in standard BGK, computing the discrete moments of the evolution equation, the source term disappears, because $f$ and $\disG$ have the same discrete moments, 
\[
{\mom_{\Delta}}^{n+1}_j = {\mom_{\Delta}}_j^n -\left. \Dt D_x \right|_j\ket{v\,\phi(v) f^n},
\]
Thus, $n_{\Delta}$, $u_{\Delta}$ and the temperature $T_{\Delta}$ are updated using an explicit integration of the macroscopic equations. This however is not enough to obtain $\disG^{n+1}$. We need also the evolution equation for the stress tensor \eqref{e:StressDecay}, which we now apply to the space inhomogeneous case. Integrating this equation backward in time, we have
\[
{\Theta_{\Delta}}_j^{n+1} = {\Theta_{\Delta}}_j^n - \left. \Dt D_x \right|_j\ket{v(v-u^n_{\Delta})\otimes(v-u^n_{\Delta})f^n} + \Delta t\frac{1-\omega}{\omega}\big( {T_{\Delta}}_j^{n+1}- {\Theta_{\Delta}}_j^{n+1}\big).
\]
Since $T^{n+1}_{\Delta}$ can be recovered from $\mom_{\Delta}^{n+1}$, $\Theta^{n+1}_{\Delta}$ can be easily obtained from previously computed quantities, and, from these, $\mathbb{T}^{n+1}_{\Delta}$. Now, $\mom_{\Delta}^{n+1}$ and $\mathbb{T}^{n+1}_{\Delta}$, completely define $\disG^{n+1}$, so that $f^{n+1}$ can be easily obtained from \eqref{e:semiAP_ES}.
It is easy to prove that this algorithm is indeed AP, see \cite{FilbetJin2010ES}.

\subsubsection*{AP schemes for the polyatomic BGK model} 

Finally, we note that an analogous approach can also be applied to the polyatomic model \eqref{eq:BGK_poly}. Here one computes the discrete moments integrating \eqref{eq:BGK_poly} on the whole velocity space $\R^D$. In particular, energy conservation gives,
\[
{E_{\Delta}}_j^{n+1} = {E_{\Delta}}_j^{n} - \left. \Dt D_x \right|_j\ket{\tfrac12 m v ||v||^2)f^n}.
\]
Since momentum and density are already known thanks to mass and momentum conservation, we also know the internal energy, $(\rho e)^{n+1} = E^{n+1}-\tfrac12 \rho^{n+1} (u^{n+1})^2 = n^{n+1}D\Teq^{n+1}$.

In this way the discrete moments $n_{\Delta}$, $u_{\Delta}$ and the equilibrium temperature $(\Teq)_{\Delta}$ at the time $t^{n+1}$ are found. Next, the equation for the evolution of the temperature \eqref{eq:ConsTrot} is discretized backward in time, to give
\[
(n\Lambda_r)_\Delta^{n+1} = (n\Lambda_r)_\Delta^{n}-\Delta t D_x \big(n\,u\, \Lambda_r\big)^n_\Delta + \Delta t\frac{n_\Delta^{n+1}}{Z_r\tau_\Delta^{n+1}}\big( {\Teq}_\Delta^{n+1} - {\Lambda_r}_\Delta^{n+1}\big),
\]
where the index $j$ has been dropped for simplicity. This equation yields
 ${\Lambda_r}_\Delta^{n+1}$, which, through energy conservation, yields ${\Lambda_t}_\Delta^{n+1}$. These two quantities were the missing ingredients to compute $\disM^{n+1}$, from which we find $f^{n+1}$ from \eqref{e:semiAP1}.

%%%%%%%%%%%%%%%%%%%%%%%%%%%%%%%%%%%%%%%%%%%%%%%%%%%%%%%%%%%%%%%%%%%%%%%%%%%%%%%%
\subsection{High order asymptotic preserving schemes}
\label{s:HighOrder}
%%%%%%%%%%%%%%%%%%%%%%%%%%%%%%%%%%%%%%%%%%%%%%%%%%%%%%%%%%%%%%%%%%%%%%%%%%%%%%%%
So far, I have discussed the issues underlying the discretization of kinetic models of BGK type using as an example a first order numerical scheme. However, as I already pointed out, in the numerical integration of kinetic equations it is important to use high order schemes, because these allow to use relatively coarse grids, still obtaining accurate solutions. We are now moving on to describe high order accurate schemes for BGK equations for mixtures.

The numerical integration in velocity space needed to compute the discrete moments is usually performed with the trapezoid rule, even for high order schemes, assuming that the bounds in the discrete velocity space are large enough to ensure that $f$ and its derivatives are negligible at the boundary of the discrete velocity space. In that case, \eqref{eq:Euler_mclaurin} holds and the quadrature error in velocity is usually smaller than the discretization errors in space and time. 

%%%%%%%%%%%%%%%%%%%%%%%%%%%%%%%%%%%%%%%%%%%%%%%%%%%%%%%%%%%%%%%%%%%%%%%%%%%%%%%%
\subsubsection{High order space discretization}
\label{s:HighOrderSpace}
%%%%%%%%%%%%%%%%%%%%%%%%%%%%%%%%%%%%%%%%%%%%%%%%%%%%%%%%%%%%%%%%%%%%%%%%%%%%%%%%
To fix ideas, we consider again a one-dimensional problem, as in eq. \eqref{e:1dBGK}, for a single species BGK model. The extension to the other models is straightforward, because the main difficulty occurs in the transport term, which is same for all cases. 

Usually, high order schemes fro hyperbolic problems are constructed as finite volume methods. One introduces the space {\em cell averages} of the distribution function
\[
\overline{f}_{j,k}(t) = \frac{1}{\Dx}\int_{x_j-\tfrac12\Dx}^{x_j+\tfrac12\Dx} f(x,v_k,t)\; \D x.
\]
Integrating \eqref{e:1dBGK} on each space cell and dividing by $\Dx$, one finds the finite volume formulation of the BGK model
\[
\totder{t}{\overline{f}}_{j,k}(t) = - \frac{1}{\Dx}(v_k)_1\left( f(x_j+\tfrac12\Dx,v_k,t) - f(x_j-\tfrac12\Dx,v_k,t) \right) + \overline{\frac{1}{\tau}\left( {\M}_{j,k}-f_{j,k} \right)},
\]
which indicates that it is necessary to compute the cell average of the whole source term.
This formulation is quite convenient for the evaluation of the space derivative of $f$, but complicates the source term. In fact, the cell average of a function coincides with its point value at the cell center only up to terms of order $O(\Dx)^2$. For higher order accuracy, the cell averages must be evaluated by quadrature, and this causes an unwanted coupling between the space cells.
In fact,  now a knowledge of the function being integrated at several quadrature nodes within the cell is needed. Thus, from the cell averages of $f$, it is necessary to reconstruct the point values of $f$, typically with the aid of a piecewise polynomial interpolator. But these algorithms require a stencil which couples together information coming from neighboring space cells. Since the algorithm is implicit in time, one must solve a non linear coupled algebraic system of equations, instead of a linear scalar equation for each cell. For this reason, it is preferable to use {\em finite difference schemes}, in which the main variables are the point values of the unknown function. Another possibility is to apply the treatment of stiff sources proposed in \cite{BoscarinoSource2018}.

In finite differences, the semidiscrete system of equations can be written as
\[
\totder{t}{{f}}_{j,k}(t) = - \left( \hat{F}_{\jp}((v_k)_1,f(v_k,t)) - \hat{F}_{\jm}((v_k)_1,f(v_k,t)) \right) + \frac{1}{\tau}\left( {\M}_{j,k}-f_{j,k} \right),
\]
where $\hat{F}_{\jp}$ is a numerical flux, see \cite{Shu97}, and \cite{PieracciniPuppo2007} for the application of finite differences to the BGK setting.
Note that in this case, to preserve high accuracy the grid spacing $\Dx$ must be uniform or at least smoothly varying in $x$. Suppose that the values of the flux function $F(f)$ are given at the grid points $x_j$ . In the present case, the flux function is linear, with $F(f)=vf$. The first step is to look for a function $\hat{F}$ that interpolates the data $F_j =F(f(.,x_j,.))$ in the sense of cell averages, namely
\[
F_j(v_k,t) = \frac{1}{\Dx} \int_{x_j-\tfrac12\Dx}^{x_j+\tfrac12\Dx} \hat{F}(x,v_k,t)\; \D x.
\]
Then the derivative of $F$ will be given by
\[
\partial_x \left. F(\cdot,v_k,t)\right|_{x_j} = \frac{1}{\Dx} \left( \hat{F}(x_{\jp},v_k,t) - \hat{F}(x_{\jm},v_k,t)\right). 
\]
We outline the main steps of the construction: more details will be found in \cite{PieracciniPuppo2007}. The approximation to $\hat{F}$ is typically a piecewise polynomial function, with jump discontinuities at the cell borders $x_{j\pm\tfrac12}$. To ensure stability, as in the first order case, it is necessary to pick information coming from the correct direction. In other words, it is necessary to introduce upwinding. In particular, we will use flux splitting, which is particularly straightforward for the convective term of the BGK model.
Thus, we write the flux $F$ as the sum of its positive and negative parts: $F = F^++F^-$, where $F^+$ and $F^-$ have only non-negative (respectively, non-positive) eigenvalues. In our case, the flux splitting will depend on the sign of $v$, namely:
\[
F^+=\begin{cases} vf & \mbox{if}\; v \geq 0 \\ 0 & \mbox{otherwise}\end{cases}
\qquad 
F^-=\begin{cases} 0 & \mbox{if}\; v \geq 0 \\ vf & \mbox{otherwise}\end{cases}
\]
Next, two reconstructions are computed, one from $F^+$ and one from $F^-$, which will yield respectively $\hat{F}^+(x,v,t)$ and $\hat{F}^-(x,v,t)$. Both of them are piecewise polynomial functions in $x$ with jumps at the cell interfaces $x_{j\pm 1/2}$. To enforce upwinding, we pick the value from the left
for the positive flux, $\hat{F}^+ =\hat{F}^+(x^-_{\jp})$, and we pick the value from the
right for the negative flux: $\hat{F}^- =\hat{F}^-(x^+_{\jp})$. Thus the numerical flux at the cell interface $x_{j+1/2}$ is given by
\begin{equation}
\hat{F}_{\jp}= \hat{F}^+(x_{\jp}^-)+\hat{F}^-(x_{\jp}^+)
\end{equation}
and the conservative approximation to the space derivative will be
given by:
\begin{equation}
\partial_x (v_1 \, f)|_{j} = \frac{1}{\Delta x} \left( \hat{F}_{j+1/2} - \hat{F}_{j-1/2}
\right).  \label{eq:cons_fd}
\end{equation}
In the present case, the structure of $F$ is particularly simple,
and one has: \begin{equation} \hat{F}_{j+1/2}(v) = \max(v,0)
\hat{f}(x_{j+1/2}^-) + \min(v,0) \hat{f}(x_{j+1/2}^+),
\end{equation}
where $\hat{f}$ is a piecewise polynomial function such that $f_j
= \frac{1}{\Delta x} \int_{I_j} \hat{f} \, dx$.

For example, for the first order scheme $\hat{f}$ is piecewise
constant, namely $\hat{f}(x)|_{I_j}\equiv f_j$. Thus
$\hat{f}(x_{j+1/2}^-) = f_j$ while $\hat{f}(x_{j+1/2}^+)=f_{j+1}$
and the numerical flux in this case will be given by:
\begin{equation}
\hat{F}_{j+1/2}(v) = \max(v,0) f_j + \min(v,0) f_{j+1}.
\label{eq:first_flux}
\end{equation}
The reconstruction is carried out with piecewise  non oscillatory polynomials. There are several algorithms in the literature. A thorough review of WENO and ENO reconstructions can be found in \cite{Shu97}. A more recent contribution is \cite{CWENO2018}.

%%%%%%%%%%%%%%%%%%%%%%%%%%%%%%%%%%%%%%%%%%%%%%%%%%%%%%%%%%%%%%%%%%%%%%%%%%%%%%%%
\subsubsection{High order time discretization}
\label{s:HighOrderTime}
%%%%%%%%%%%%%%%%%%%%%%%%%%%%%%%%%%%%%%%%%%%%%%%%%%%%%%%%%%%%%%%%%%%%%%%%%%%%%%%%

The purpose of this section is to increase the time accuracy, while preserving a splitting between the fast source term, integrated implicitly, and the slow transport part, to be integrated explicitly. We will consider additive Runge-Kutta schemes \cite{Carpenter2003}, applied to kinetic problems, see \cite{PareschiRusso2005} or \cite{DiMarcoPareschi2013}.

 To set the notation for additive Runge-Kutta IMEX schemes, we
consider the autonomous ODE problem:
\begin{equation}
\begin{array}{l} y'(t) = f(y) + \frac{1}{\tau}g(y)  \\ y(t_0) = y_0. \end{array}
\label{eq:model_ode}
\end{equation}
We suppose that $0 < \tau << 1$, i.e. $g/\tau$ is stiff, so that we wish to integrate it
implicitly, while $f$ is non stiff, but highly non linear, which
means that  for $f$ an explicit scheme is more efficient.

Let $\tilde A = (\tilde a_{is})$ and $A = (a_{is})$ be two $\nu
\times \nu$ matrices, with $\tilde A$ strictly lower triangular, and
$A$ lower triangular, with non zero terms along the diagonal, and let $\tilde b$, $b$, $\tilde c$, $c$ be 
coefficient vectors with $\nu$ elements. In other words, for the implicit part we consider a DIRK scheme (diagonally implicit Runge-Kutta). More effectively, an IMEX Runge-Kutta
scheme is represented by the following double Butcher's {\em
tableaux}:
\bigskip

\begin{minipage}{0.45\textwidth}
 \vspace{0pt}
\begin{center}
\begin{tabular}{l|r}
$\tilde{c}$ & {\Large $\tilde{A}$} \\
\hline
 & $\tilde{b}^T$
\end{tabular}
\end{center}
\end{minipage}
\begin{minipage}{0.45\textwidth}
 \vspace{0pt}
\begin{center}
\begin{tabular}{l|r}
$c$ & {\Large $A$} \\
\hline
 & $b^T$
\end{tabular}
\end{center}
\end{minipage}

\bigskip

\noindent where $c$ and $\tilde{c}$ are included for completeness, although they are not needed for the autonomous system we are considering, and the tableau on the right refers to the implicit part. 
The resulting numerical scheme for \eqref{eq:model_ode} is:
\begin{eqnarray}
y^{n+1} & = & y^n + \Dt \sum_{i=1}^\nu \tilde b_i f(y^{(i)}) +
\frac{\Dt}{\tau}  \sum_{i=1}^\nu b_i  g(y^{(i)})   \label{eq:imex},
\end{eqnarray}
where the stage values $y^{(i)}$ are given by:
\begin{eqnarray}
y^{(1)} & = & y^n + \frac{\Dt}{\tau} a_{11} \, g(y^{(1)}) \\
y^{(i)}  & = & y^n + \Dt \sum_{l=1}^{i-1} \tilde a_{il} f(y^{(l)})
+ \frac{\Dt}{\tau} \sum_{l=1}^i a_{il} \, g(y^{(l)}). \label{eq:imex_stage}
\end{eqnarray}

The coefficients of the Butcher's tableaux are computed in order
to maximize accuracy. Further, the implicit scheme must be
L-stable, to ensure that the numerical solution relaxes on the
equilibrium solution, if $\tau$ is very small.
Moreover, it is desirable that the IMEX scheme becomes a high
order explicit numerical scheme for the conserved variables, when
$\tau \rightarrow 0$. 

The IMEX scheme is called of A-type in \cite{DiMarcoPareschi2013} if the matrix $A$ is invertible, i.e. $a_{ii}\neq 0,\; \forall i$.It is easy to see that the IMEX scheme is AP if the matrix $A$ is invertible, see \cite{PareschiRusso2005}. In fact, the equation for the $i$-th stage is
\[
f^{(i)}   =  \left[ f^n - \Dt \sum_{\ell=1}^{i-1} \tilde a_{i\ell} D_x(vf\stl)
    + \frac{\Dt}{\tau} \sum_{\ell=1}^{i-1} a_{i\ell} \, (\disM\stl-f\stl)\right]
    + \frac{\Dt}{\tau}  a_{ii} \, (\disM^{(i)}-f^{(i)}).
\]
The first parenthesis contains only previously computed quantities. The last term, when $\tau \to 0$, and $a_{ii}\neq 0$ yields $f^{(i)}\to \disM^{(i)}$. Thus, at each stage, $f^{(i)}$ is projected on the local discrete Maxwellian. Taking moments of the previous equation, in the limit $\tau \to 0$, one obtains
\[
U^{(i)}   =  U^n - \Dt \sum_{\ell=1}^{i-1}  \tilde{a}_{il} D_x \ket{v\phi \disM\stl}, \qquad \disM^{(i)}=\Mop(U^{(i)})
\]
which is the computation of the $i$-th stage for the solution of Euler equation with the explicit Runge-Kutta scheme with Butcher tableau $\tilde{A}$. I recall that $\Mop$ denotes the Maxwellian operator, which computes the discrete Maxwellian built solving the non linear algebraic system \eqref{e:DiscreteMaxwellian}, starting from the discrete moments $U^{(i)}$. Further, closing the Runge Kutta step one obtains
\begin{equation}\label{e:fUpdate}
f^{n+1}   =  f^n - \Dt \sum_{i=1}^\nu \tilde b_{i} D_x(vf^{(i)})
    + \frac{\Dt}{\tau} \sum_{i=1}^{\nu} b_{i} \, (\disM^{(i)}-f^{(i)}).
\end{equation}
Again, computing moments, in the limit $\tau \to 0$, the following equation is obtained
\[
U^{n+1}   =  U^n - \Dt \sum_{i=1}^{\nu}  \tilde{b}_i D_x \ket{v\phi \disM^{(i)}}. 
\]
In other words, as $\tau \to 0$, the IMEX scheme becomes the explicit scheme for the Euler equations, with a tableau which coincides with the tableau for the explicit part of the IMEX pair. However, a couple of remarks are in order.

The first stage of an A-type scheme applied to the BGK equation is
\[
f^{(1)}   =   f^n 
    + \frac{\Dt}{\tau}  a_{11} \, (\disM^{(1)}-f^{(1)}),
\]
with $\disM^{(1)}$ computed from the discrete moments $U^{(1)}$, which are given by $U^{(1)}=U^n$. Thus, the scheme is implicit, but it starts projecting $f$ on the old Maxwellian. In this fashion, even the first update of the moments (which occurs from the second stage onwards) is carried out at equilibrium, when $\tau\to 0$. For this reason, schemes of type A are AP even with initial data which are not well prepared. 

However, the first order scheme with which this discussion started, namely \eqref{e:semiAP1}, updates the moments {\em before} projecting $f$ on the Maxwellian, and still is AP, albeit on well prepared initial data. Applying the formalism just introduced, we see that it is actually a two stages IMEX scheme given by
\begin{align*}
f^{(1)} & =  f^n \\
\mom^{(1)}  & =  \mom^n \\
\mom^{(2)} & =  \mom^n - \Dt D_x\ket{v\phi f^{(1)}} \\
f^{(2)}  & =  f^n - \Dt D_x \big(v  f^{(1)}\big)
    + \frac{\Dt}{\tau} (\disM^{(2)}-f^{(2)}).
\end{align*}
Then the final update is given by
\begin{align*}
f^{n+1}  & =  f^n - \Dt D_x \big(v  f^{(1)}\big)
    + \frac{\Dt}{\tau} (\disM^{(2)}-f^{(2)}) \\
\mom^{n+1} & =  \mom^n - \Dt D_x\ket{v\phi f^{(1)}}
\end{align*}
which coincides with \eqref{e:semiAP1}, because $f^{(1)}=f^n$. Note that the final update coincides with the last stage, i.e. $f^{n+1}=f^{(2)}$ and $\mom^{n+1} = \mom^{(2)}$.
This corresponds to the IMEX scheme described by the following tableaux
\bigskip

\begin{minipage}{0.45\textwidth}
 \vspace{0pt}
\begin{center}
\begin{tabular}{l|r}
$\tilde{c}$ & {$\begin{array}{lr}0 & 0 \\ 1 & 0 \end{array}$} \\
\hline
 & $\begin{array}{lr}1 & 0  \end{array}$
\end{tabular}
\end{center}
\end{minipage}
\begin{minipage}{0.45\textwidth}
 \vspace{0pt}
\begin{center}
\begin{tabular}{l|r}
$c$ & {$\begin{array}{lr}0 & 0 \\ 0 & 1 \end{array}$} \\
\hline
 & $\begin{array}{lr} 0 & 1  \end{array}$
\end{tabular}
\end{center}
\end{minipage}

\bigskip

We see therefore that the scheme I proposed initially is not of type A, because the implicit matrix $A$ is not invertible. However, we have already seen that the resulting scheme is nevertheless AP, provided the initial data are well prepared. The scheme above belongs to a class of IMEX schemes proposed in \cite{Ascher1997} and \cite{Carpenter2003} which are also used in the context of kinetic equations in \cite{DiMarcoPareschi2013}, where they are called schemes of type CK. These schemes are given by tableaux with the following form

\bigskip

\begin{minipage}{0.45\textwidth}
 \vspace{0pt}
\begin{center}
\begin{tabular}{l|r}
$\tilde{c}$ & {$\begin{array}{lr}0 & 0 \\ \tilde{a} & \tilde{A} \end{array}$} \\
\hline
 & $\tilde{b}^T$
\end{tabular}
\end{center}
\end{minipage}
\begin{minipage}{0.45\textwidth}
 \vspace{0pt}
\begin{center}
\begin{tabular}{l|r}
$c$ & {$\begin{array}{lr}0 & 0 \\ a & A \end{array}$} \\
\hline
 & $b^T$
\end{tabular}
\end{center}
\end{minipage}

\bigskip
\noindent where now $\tilde{A}$ and $A$ are $(\nu-1) \times (\nu-1)$ lower triangular matrices, with $A$ invertible, and $\tilde{A}$ strictly lower triangular, while $\tilde{a}$ and $a$ are $\nu-1$ components vectors.
Schemes of type CK are also AP, but they need well prepared initial data.

A second remark concerns the final update of $f$ occurring in \eqref{e:fUpdate}. If at all stages $f$ has been projected on equilibrium, i.e. $f^{(i)}=\disM^{(i)}$, then the effect of \eqref{e:fUpdate} is to drive $f$ away from equilibrium. This does not occur in \eqref{e:semiAP1} because the final update actually coincides with the last stage computed. This property is satisfied if $b_1=a_{\nu-1}, b_{i+1}=A_{\nu-1,i}, i=1,\dots,\nu-1$ and $\tilde{b}_1=\tilde{a}_{\nu-1}, \tilde{b}_{i+1}=\tilde{A}_{\nu-1,i}, i=1,\dots,\nu-1$. In short, for these schemes the vectors $b$ and $\tilde{b}$ coincide with the last row of the matrices in their respective tableau. The schemes that satisfy this property are called GSA (Globally Stiffly Accurate) in \cite{DiMarcoPareschi2013}, and they are particularly suited for applications in kinetic problems, because they are AP not only on the evolution of the moments, but also on the evolution of $f$. Note that the GSA property introduces $2\nu$ constraints on the coefficients of the two Butcher tableaux composing the IMEX pair. It is not surprising therefore that the order conditions of GSA schemes are penalizing: one needs four stages to obtain a second order accurate GSA A type scheme, see \cite{Ascher1997, Carpenter2003} for more details.

\bibliographystyle{abbrv}
\bibliography{CocktailBib}{} 

\end{document}